\documentclass[11pt,letterpaper]{article}
\pdfoutput=1

\usepackage[T1]{fontenc}
\usepackage{lmodern}
\usepackage[margin=1in]{geometry}
\usepackage{microtype}
\usepackage{amsmath,amssymb,amsthm,mathtools}
\usepackage{graphicx}
\usepackage{booktabs,array,tabularx}
\usepackage{enumitem}
\usepackage[numbers,sort&compress]{natbib}
\usepackage{xcolor}
\definecolor{linkblue}{rgb}{0.12,0.28,0.68}
\usepackage[linktocpage=true]{hyperref}
\hypersetup{
  colorlinks=true,
  citecolor=linkblue,
  linkcolor=linkblue,
  urlcolor=linkblue,
  pdfauthor={Ning Bao, Christian Ferko},
  pdftitle={Beyond Strong Subadditivity in Holographic RG Flows}
}

\usepackage{fancyhdr}

\fancypagestyle{firstpage}{%
  \fancyhf{}
  \fancyhead[R]{\today}
  \fancyfoot[C]{\thepage}
  
}

\newtheorem{assumption}{Assumption}[section]

\newtheorem{theorem}[assumption]{Theorem}

\newtheorem{corollary}[assumption]{Corollary}
\newtheorem{proposition}[assumption]{Proposition}
\theoremstyle{remark}

\newcommand{\dd}{\mathrm d}
\newcommand{\QFive}{Q_5}
\newcommand{\PhiFive}{\Phi_5}

\numberwithin{equation}{section}
\allowdisplaybreaks
\title{\bfseries Beyond Strong Subadditivity: Holographic Entropy Inequalities Along Renormalization Group Flows}
\author{
  Ning Bao$^{a,b}$ and Christian Ferko$^{a,c}$\\[0.8em]
  \small $^a$Department of Physics, Northeastern University, Boston, MA 02115, USA\\
  \small $^b$Computational Science Initiative, Brookhaven National Laboratory,
  Upton, NY 11973, USA\\
  \small $^c$The NSF Institute for Artificial Intelligence and Fundamental Interactions\\[0.5em]
  \small
  \texttt{ningbao75@gmail.com}
  \qquad
  \texttt{c.ferko@northeastern.edu}
}
\date{}

\begin{document}

\maketitle
\thispagestyle{firstpage}

\begin{abstract}
Strong subadditivity (SSA) on a common light cone gives the Casini-Huerta entropic proof of the three-dimensional $F$-theorem.  We ask whether holographic entropy inequalities beyond SSA similarly constrain renormalization group flows for which every intermediate theory admits a semiclassical holographic description.  For small, disjoint deformations of a common light-cone region, we show that the second-order response of a broad class of balanced holographic inequalities depends only on pairwise correlations already controlled by SSA. A six-party example shows that the full finite inequality nevertheless contains genuinely multipartite information, so its disappearance is a limitation of the second-order expansion rather than of the inequality itself. Two natural finite constructions do not recover the missing information. We nevertheless find two ways in which information beyond SSA survives. A five-party inequality bounds the rate at which a conditional correlation grows as one region is enlarged. Separately, a continuum limit of the odd-cyclic inequalities gives a constraint on the angular shape dependence of entanglement entropy, and Lorentz symmetry relates this constraint to radial evolution. Thus holographic entropy inequalities beyond SSA do constrain entanglement along RG flows, although we do not obtain a second universal analogue of the $F$-function.
\end{abstract}

\clearpage
\setcounter{tocdepth}{2}
\tableofcontents
\clearpage

\section{Introduction}
\label{sec:introduction}

Renormalization group (RG) flow imposes an ordering on quantum field
theories: short-distance degrees of freedom may disappear at long distances,
but they cannot be recovered by reversing the coarse graining.  One of the
most useful ways to make this statement quantitative is to construct a
function of scale that decreases monotonically and reduces, at conformal
fixed points, to a universal measure of degrees of freedom.  In two and three
spacetime dimensions, entanglement entropy provides such functions without
requiring a choice of running couplings
\cite{Zamolodchikov:1986gt,Casini:2004bw,Casini:2012ei}.

The three-dimensional example will be our starting point.  Let $S(R)$ be the
vacuum entanglement entropy of a disk of radius $R$.  Strong subadditivity
(SSA), applied to boosted disks whose boundaries approach a common light
cone, implies $S''(R)\leq0$.  Therefore the renormalized disk entropy
\begin{equation}
 \mathcal F(R):=R S'(R)-S(R),
 \label{eq:intro-F-function}
\end{equation}
obeys $\mathcal F'(R)=R S''(R)\leq0$.  At a conformal fixed point the local
perimeter term cancels from \eqref{eq:intro-F-function}, leaving the sphere
free energy $F$.  This is the Casini-Huerta $F$-theorem
\cite{Jafferis:2011zi,Casini:2012ei,Casini:2015woa}.  Its proof is especially
suggestive because the required inequality is kinematic, while the RG
information enters through a one-parameter family of regions.

At leading semiclassical order, holographic entanglement entropies are
computed by the Ryu-Takayanagi (RT) prescription in static bulk geometries
and by its covariant Hubeny-Rangamani-Takayanagi (HRT) extension
\cite{Ryu:2006bv,Hubeny:2007xt}.  Here and below, a \emph{party} is a
labeled boundary subsystem.  Holographic entropies obey monogamy of mutual
information (MMI),
$I(A:BC)\geq I(A:B)+I(A:C)$, where
$I(A:B):=S(A)+S(B)-S(AB)$, as well as an infinite family of multipartite
inequalities beyond SSA
\cite{Hayden:2011ag,Bao:2015bfa,HernandezCuenca:2019wgh}.  This raises a
natural question: if every theory along an RG flow has a semiclassical
holographic dual, can these additional inequalities be converted into new
constraints on the flow?  The most ambitious answer would be a new radial
monotone, independent of $\mathcal F(R)$.  A weaker but still useful answer is
an inequality that constrains scale derivatives or shape responses at each
point of the flow.

Several recent developments sharpen this question.  Entropic proofs have
been used to establish irreversibility directly in holographic RG geometries,\footnote{Here we note that when we say holographic RG, we mean renormalization group flows that are holographic at every point in the RG flow, as opposed to holographic renormalization in the bulk direction. We will mean the former in all instances in this paper when discussing holographic RG.}
while weighted-min-cut models of the cone of holographic entropy vectors (the
lists of entropies for all unions of a fixed set of parties), majorization
tests which compare the ordered partial sums of the effective region sizes
appearing on the two sides of a common-light-cone inequality, and interpretations of
entanglement-wedge depth as an RG scale have clarified how holographic
inequalities encode scale
\cite{Deddo:2024hll,Grimaldi:2025hec,Czech:2026maj,
Grimaldi:2026com,Czech:2026rgp}.  Related work has also emphasized that
multipartite holographic constraints need not be preserved by operations
which are harmless for ordinary entropy inequalities
\cite{HernandezCuencaHubenyJia2024Multipartite}.  Our aim is complementary:
we ask which of these inequalities survive controlled light-cone limits as
constraints on derivatives along a flow.

The analogy with the $F$-theorem is not automatic.  Every entropy term must
be built from one common set of disjoint boundary \emph{atoms}, by which we
mean fixed elementary regions whose unions furnish every argument of $S$.
We call such a simultaneous representation a common atomization; a
collection of individually valid null cuts need not admit one.  Moreover, a
generic holographic entropy inequality has a nonzero \emph{gap}, meaning the
value of its nonnegative linear entropy combination, before two nearby scales
are compared.  Dividing this gap by the shell thickness does not give a
finite differential statement unless the zeroth-order contribution cancels.
Finally, covariant HRT entropy inequalities can require geometric input
beyond the existence of the boundary regions.  These issues force us to
distinguish coefficient identities, fixed-scale inequalities, and statements
along one physical RG trajectory.

In this work, we organize the analysis around three questions.

\begin{enumerate}[label=(\roman*),leftmargin=2.1em]
 \item \emph{What obstructs the direct replacement of SSA?}
 We first characterize which families of light-cone cuts, with radial profiles
 additive in the party deformations, can be constructed from one common set of
 disjoint boundary atoms.  For the resulting disjoint-support bump family, we
 prove that every nonzero inequality which is centered (one distinguished
 region appears in every term), balanced (the local contribution of each atom
 cancels), and valid for all weighted-graph minimum cuts has a nontrivial pair
 projection.  In other words, some two-party cumulative coefficient is
 nonzero.  Its common-amplitude quadratic response is a nonnegative sum of
 pair susceptibilities, defined below as the negatives of mixed second shape
 derivatives; their signs are already controlled by SSA.  The
 six-party functional $G_6$ nevertheless contains information not generated
 by nonnegative sums of the elemental Shannon inequalities.  Exact removal
 of selected parties at the level of coefficient identities, finite
 combinations evaluated at several scales with one atomization held fixed,
 and a regulated domain-wall calculation sharpen this obstruction in
 complementary ways.

 \item \emph{Can a finite-party inequality nevertheless constrain radial
 growth?}
 We evade the fixed-atom obstruction by absorbing an already-grown petal
 $A_R$ into the conditioning system of the five-party inequality; here by
 ``petal'' we mean a boundary region occupying one angular slot on the
 regulated light cone.  Its remaining zeroth-order term is the conditional
 mutual information
 $I(C:D\mid ZA_R)$, where
 $I(X:Y\mid W):=S(XW)+S(YW)-S(W)-S(XYW)$.  On the corresponding Markov face -- the locus on which this quantity vanishes -- the inequality gives an upper bound on the right logarithmic rate at which a conditional mutual
 information grows as the petal is enlarged.  If the equality persists along
 an interval, the same bound makes a particular combination of conditional
 mutual informations monotone.  The upper bound is not a conic consequence of
 SSA, MMI, or their purifier images; that is, it cannot be obtained as a
 nonnegative linear combination of those inequalities.  A purifier image is obtained by
 exchanging a named subsystem with the complementary purifying system.  The
 bound is sharply saturated by a refining family of
 holographic graph models.  Because the spectators remain fixed, however,
 this is a mixed scale/shape response rather than a self-similar dilation of
 the complete configuration.

 \item \emph{Can one avoid the Markov hypothesis?}
 The odd-cyclic inequalities, defined for an odd number of cyclically ordered
 regions and built from entropies of consecutive blocks, provide a complementary
 answer.  Their infinite-party angular limit gives a beyond-SSA shape
 inequality without imposing an entropy equality.  A Lorentz Ward identity
 rewrites this result as a mixed radial/shape constraint.  The residual shape
 response has no definite sign, so the construction does not by itself
 produce a scalar radial monotone.
\end{enumerate}

The answer to our original question is therefore qualified.  Holographic
inequalities beyond SSA do constrain entanglement response along the flow,
but the two successful constructions retain different limitations.  The
$Q_5$ shell theorem uses a trajectory-dependent Markov cancellation and an
atomization that changes with the scale, whereas the odd-cyclic theorem takes
an infinite-party angular limit.  Neither construction conflicts with the
finite fixed-atom no-go results.

We work in three boundary spacetime dimensions and at leading classical
holographic order.  The finite graph inequalities are unconditional within
their stated domain.  Applications to covariant HRT entropies use the
common-slice hypotheses stated below or, for the $Q_5$ and odd-cyclic
results, assume covariant validity on the common regulated slice.  Null,
continuum, and endpoint limits require the differentiability and uniformity
assumptions given where they are used.  In particular, we do not claim a
universal beyond-SSA radial monotone for smooth single-boundary RG-flow
vacua.

The remainder of this article is organized as follows.  Section~\ref{sec:ssa-baseline} reviews the SSA mechanism behind the $F$-theorem, formulates the common-atom condition for light-cone cuts, and derives the pair projection of a general centered inequality.  Section~\ref{sec:obstructions} tests direct finite-atom extensions using the six-party inequality, exact deletion, finite scale differences, and an independent bulk calculation.  In
Section~\ref{sec:beyond-ssa}, we relax the fixed-atom assumptions and derive
the five-party shell bound and the odd-cyclic continuum constraint.
Section~\ref{sec:conclusion} summarizes our results and presents directions for future research. Ancillary material which provides further details about certain results in the body of this work has been collected in Appendices \ref{app:selectors} and \ref{app:bulk-pair-response}.

\section{Strong subadditivity and pair rigidity on the light cone}
\label{sec:ssa-baseline}

The Casini-Huerta proof will serve both as our benchmark and as a diagnostic
for proposed generalizations.  We begin by recalling how SSA, the Markov
property of the UV vacuum, and a controlled null limit combine to produce the
$F$-function.  We then formulate the common-atom condition for multipartite
light-cone cuts and study the simplest deformations which admit such a
 realization.  This analysis makes the first obstruction transparent: at
 quadratic order, every centered, balanced inequality valid on graph entropy
 vectors in the family considered below is projected onto pair
 susceptibilities whose signs are already fixed by SSA.

Consider a Lorentz-invariant vacuum in $2+1$ dimensions and let $S(R)$ denote
the entropy of a round disk $D_R$ on a fixed Cauchy slice.  The regulated
entropy has a local perimeter divergence, but the combination
$\mathcal F(R)$ in \eqref{eq:intro-F-function} is finite.  Casini and Huerta
apply SSA to many boosted disks whose boundaries lie on a common null cone.
In the continuum limit, the union and intersection approach round disks with
nearby radii, while the local contributions supported at their
short-wavelength junctions cancel.  The result is the concavity inequality
\begin{equation}
 S''(R)\leq0.
 \label{eq:disk-concavity}
\end{equation}
It follows immediately that
\begin{equation}
 \mathcal F'(R)=R S''(R)\leq0.
 \label{eq:F-monotonicity}
\end{equation}
At a conformal fixed point,
\begin{equation}
 S_*(R)=\alpha_\delta\frac{R}{\delta}-F_*,
 \label{eq:disk-CFT-entropy}
\end{equation}
so that $\mathcal F(R)=F_*$.  A flow from a UV CFT to an IR CFT therefore
obeys $F_{\rm UV}\geq F_{\rm IR}$.

There are three ingredients in this argument.  First, all regions in SSA are
unions of the same boundary atoms.  Second, in the regulator-controlled
null-cone formulation one subtracts the UV CFT contribution.  The UV CFT
vacuum is a quantum Markov state on the light cone: it saturates SSA, so its
SSA gap vanishes and removes the regulator-sensitive UV terms
\cite{Casini:2017vbe,Casini:2017roe,Casini:2018kzx}.  Third,
the remaining finite difference has a controlled local limit.  The first
ingredient is geometric, the second removes the regulator-sensitive
zeroth-order contribution, and the third turns an entropy inequality into a
derivative constraint.  We will use precisely this three-part test for
inequalities beyond SSA.

For later use, let $0$ denote a distinguished central region, write
$[n]:=\{1,\ldots,n\}$, and let $T\subseteq[n]$ specify a union of the
$n$ disjoint outer atoms.  We write $0T$ for the union of $0$ with the atoms
labeled by $T$.  A functional is \emph{centered} when $0$ appears in every
entropy term, so it has the form
\begin{equation}
 Q_c[S]=\sum_{T\subseteq[n]}c_T S(0T).
 \label{eq:centered-functional-setup}
\end{equation}
It is \emph{balanced} when
\begin{equation}
 \sum_Tc_T=0,
 \qquad
 \sum_{i \in T}c_T=0
 \quad (i=1,\ldots,n).
 \label{eq:balanced-setup}
\end{equation}
Balance cancels the leading local contribution associated with every atom.
We will also use weighted holographic graph models.  Each boundary party is a
distinguished terminal vertex, and $S(T)$ is the minimum total edge weight of
a cut separating the terminals in $T$ from the complementary terminals.  The
list of entropies for all unions is a graph entropy vector; a further terminal,
called the purifier, represents the complement of all named parties.  Such
vectors model the leading classical static RT entropy cone.  Holographic
entropy inequalities supply many coefficient arrays for which $Q_c[S]\geq0$
on graph or RT entropy vectors
\cite{Bao:2015bfa,HernandezCuenca:2019wgh,Daguerre:2022aon}.  The question is
whether their finite gaps can be converted into information about variation with respect to the physical scale $R$.

We will distinguish three deformation parameters throughout the paper.  The
radius $R$ is a radial size parameter and becomes the RG scale for a
self-similar family.  When only one petal grows while spectator regions remain
fixed, $R\partial_R$ instead describes nested-region growth, or equivalently a
mixed scale/shape direction after all positions are made dimensionless.  The
parameter $\epsilon$ controls outward boundary bumps at fixed $R$, while
$\kappa$ controls the transverse special conformal transformation in the
bulk calculation.  Positivity of a derivative with respect to one parameter
does not imply positivity with respect to either of the others.

The identification of $R$ with an RG scale is meant in
the following boundary-field-theory sense.  For a self-similar family
$\gamma^{(R)}(\phi)=R\,\widehat\gamma(\phi)$, varying $R$ is a global
dilation which leaves all dimensionless shape data fixed.  If the UV
theory is perturbed by
$\sum_\alpha\lambda_\alpha\int \dd^3x\,\mathcal O_\alpha$, then, after
the rescaling $x^\mu=R\widetilde x^\mu$, the path integral for the
corresponding reduced density matrix depends near the UV fixed point on
the dimensionless combinations
$\lambda_\alpha R^{3-\Delta_\alpha}$.  More generally, the
Callan--Symanzik equation expresses the $R$ dependence of a
regulator-independent balanced entropy combination in terms of the
running couplings at the inverse length scale $\mu\simeq R^{-1}$.
Thus increasing $R$ probes progressively lower energies, and $R\to0$
and $R\to\infty$ probe the UV and IR fixed points when the corresponding
limits exist \cite{Casini:2012ei,Casini:2015woa}.  In this sense $R$ is
a physical RG \emph{length} scale; it is neither the bulk holographic
radial coordinate nor a parameter which literally changes the
microscopic theory.  This interpretation is pure only for self-similar
dilations.  If other physical lengths are held fixed while $R$ varies,
their ratios to $R$ also change, and $R\partial_R$ instead generates a
mixed scale/shape response.

The following assumptions specify the physical setting.  They are stronger
than the corresponding statements for graph entropy vectors and will be invoked only where
needed.

\begin{assumption}[Holographic flow and regulated null limit]
Every theory along the flow admits a leading-order semiclassical holographic
description.  All entropy terms are first formed from fixed disjoint atoms on
one regulated spacelike Cauchy slice, with a common UV prescription, before
the null limit is taken.  The balanced gaps used below have finite,
regulator-independent null limits.
\end{assumption}

\begin{assumption}[Differential, continuum, and endpoint limits]
Whenever a shell inequality is divided by its logarithmic thickness, the
indicated one-sided rates exist.  Whenever the number of angular parties is
sent to infinity, the regulated gaps and symmetric shape derivatives converge
uniformly on the entropy branch under consideration, meaning a smooth phase
in which the relevant extremal-surface saddles remain fixed.  Endpoint statements
assume sufficient convergence to exchange the null limit with the UV or IR
CFT limit.  At a phase transition, the finite inequalities remain primary
and all derivative statements are understood one-sidedly.
\end{assumption}

The first task is now clear: before differentiating an entropy inequality, we
must determine whether its proposed null cuts are unions of one common set of
regions.  We turn to this geometric consistency problem next.

\subsection{Realizable light-cone cuts}
\label{sec:legality}
\label{subsec:legal-bumps}

The first requirement is geometric.  Every term in an entropy inequality
must be a union of the same fixed, disjoint boundary regions, which is
stronger than asking whether each proposed cut is individually well defined.
We therefore begin by characterizing simultaneous realizability and then
construct a deformation family that satisfies this condition.

The problem is one-dimensional on each null generator.  Parameterize the
cone by $\phi\in S^1$ and an affine radial coordinate $\varrho\geq0$.  A
positive profile $\gamma(\phi)$ specifies the cut
$\varrho=\gamma(\phi)$ and defines the region containing the tip,
\begin{equation}
 A_\gamma=\{(\phi,\varrho):0\leq\varrho\leq\gamma(\phi)\}.
 \label{eq:cone-cut}
\end{equation}
Thus $A_\gamma\subseteq A_{\widetilde\gamma}$ if and only if
$\gamma\leq\widetilde\gamma$ almost everywhere.  Set identities below are
understood up to sets of angular measure zero.

Fix a base profile $\gamma_0$ and assign a nonnegative measurable thickness
$f_i(\phi)$ to each label $i$.  For $T\subseteq[n]$, consider the target
\begin{equation}
 \gamma_T=\gamma_0+\sum_{i\in T}f_i .
 \label{eq:additive-target}
\end{equation}

Let $\mathcal T$ be the finite collection of label sets $T$ whose target cuts
must be realized simultaneously.  The targets are
\emph{common-atom realizable}, or simply realizable, if there are fixed,
mutually disjoint regions $A_i$, also disjoint from
$A_0=A_{\gamma_0}$, such that
\begin{equation}
 A_0\cup\bigcup_{i\in T}A_i=A_{\gamma_T}
 \qquad (T\in\mathcal T),
 \label{eq:legal-common-atoms}
\end{equation}
where $A_i$ has thickness $f_i$ on each generator.  The same $A_i$ must
occur in every term containing label $i$.

At fixed $\phi$, labels with zero thickness are irrelevant.  Define
\begin{equation}
 L(\phi)=\{i:f_i(\phi)>0\},
 \qquad
 \mathcal T_\phi=\{T\cap L(\phi):T\in\mathcal T\}.
 \label{eq:active-projection}
\end{equation}
$\mathcal T_\phi$ records the active labels requested by the terms.  One
radial ordering realizes all targets precisely when these subsets form a
chain, meaning that every pair is comparable by inclusion.

\begin{theorem}
\label{thm:fiberwise-chain}
The targets \eqref{eq:additive-target} are common-atom realizable if and only
if $\mathcal T_\phi$ is totally ordered by inclusion for almost every
$\phi$.
\end{theorem}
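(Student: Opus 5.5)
The proof reduces to a single null generator, because realizability is a statement about radial intervals along each ray. On each ray, every atom $A_i$ is a measurable subset of $(\gamma_0(\phi),\infty)$ of one-dimensional measure $f_i(\phi)$. The target $A_{\gamma_T}$ restricted to that ray is the interval $[0,\gamma_0+\sum_{i\in T}f_i]$. Condition \eqref{eq:legal-common-atoms} therefore says: for every $T\in\mathcal T$, the disjoint union $\bigcup_{i\in T}A_i(\phi)$ equals, up to null sets, the interval $(\gamma_0,\gamma_0+\sum_{i\in T}f_i]$. The plan is to prove both directions at fixed $\phi$ and then handle measurability in $\phi$.

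\emph{Sufficiency.} Suppose $\mathcal T_\phi$ is a chain $T_1\subsetneq T_2\subsetneq\cdots\subsetneq T_m$ of active sets. Choose a linear order on $L(\phi)$ that lists $T_1$ first, then $T_2\setminus T_1$, and so on, with labels outside $T_m$ last. Stack the active atoms radially in that order: atom $i$ occupies $(\gamma_0+\sum_{j<i}f_j,\ \gamma_0+\sum_{j\le i}f_j]$. Each $T_k$ is then an initial segment, so its union is exactly the required interval. Since inactive labels have zero thickness, $T$ and $T\cap L(\phi)$ give the same target, and every $T\in\mathcal T$ is realized on that ray. To make the construction measurable in $\phi$, I would fix a measurable selection of the ordering. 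The set of $\phi$ is partitioned into finitely many measurable pieces according to the pair $(L(\phi),\mathcal T_\phi)$, since each $\{f_i>0\}$ is measurable. On each piece, fix one compatible linear order. The atoms $A_i$ are then sets bounded by measurable profiles, hence measurable and pairwise disjoint.

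\emph{Necessity.} Suppose realizing atoms exist, and suppose two active projections $T\cap L$ and $T'\cap L$ are incomparable on a set of $\phi$ of positive measure. On such a ray pick $a\in (T\setminus T')\cap L$ and $b\in (T'\setminus T)\cap L$. Both unions $U_T$ and $U_{T'}$ are initial intervals starting at $\gamma_0$, so one contains the other, say $U_T\subseteq U_{T'}$. Then $A_a\subseteq U_T\subseteq U_{T'}=\bigcup_{i\in T'}A_i$. But $A_a$ is disjoint from every $A_i$ with $i\neq a$, and $a\notin T'$, so $A_a$ is null. This contradicts $f_a(\phi)>0$. The reverse containment is ruled out symmetrically using $b$. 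Hence the projections are comparable for almost every $\phi$.

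The main obstacle is the almost-everywhere bookkeeping. The set identities hold only up to null sets in the two-dimensional cone, and these must be converted, via Fubini, into identities up to radial null sets on almost every generator. Since $\mathcal T$ is finite, this requires discarding only a null set of angles. The rest of the argument is elementary combinatorics of intervals.
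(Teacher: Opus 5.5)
Your proof is correct and follows the paper's argument. Sufficiency is the same radial stacking along a linear order compatible with the chain, and your necessity step (nested initial intervals force $A_a$ to be null) is the same contradiction the paper phrases as $A_i$ and $A_j$ being required to lie in opposite radial orders; your partition of angles by $L(\phi)$ and your Fubini reduction just spell out the measurability and almost-everywhere bookkeeping that the paper compresses into one sentence.
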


\begin{proof}
Suppose first that fixed disjoint regions exist, and fix $\phi$ outside the
measure-zero exceptional set.  If two projected terms $C,D$ were
incomparable, there would be labels
 $i\in C\setminus D$ and $j\in D\setminus C$.  The region for $C$ is an
 initial radial interval extending from the cone tip and containing $A_i$ but
 not $A_j$, so $A_i$ must lie below $A_j$.  The
region for $D$ requires the opposite order.  This is impossible for two
intervals of positive length.

Conversely, suppose that the distinct projected terms form a chain
$C_0\subsetneq\cdots\subsetneq C_m$.  We construct the atoms directly on
the generator.  Starting at $\gamma_0(\phi)$, place the labels in $C_0$,
then those in $C_1\setminus C_0$, and so on, assigning label $i$ an interval
of length $f_i(\phi)$.  Labels absent from every requested term go last.
Each requested set is then an initial segment ending at
$\gamma_0+\sum_{i\in T}f_i$.  A fixed rule resolves ties, and finiteness of
the term list makes the construction measurable in $\phi$.
\end{proof}
The theorem requires no smoothness. Smoothness enters only in the shape derivatives below.

For a finite term list, define the \emph{separation graph}
$G_{\mathcal T}$ by joining $i$ and $j$ whenever two terms
$T,U\in\mathcal T$ obey
\begin{equation}
 i\in T\setminus U,
 \qquad j\in U\setminus T.
 \label{eq:separation-edge}
\end{equation}

An edge records incompatible radial orders.  The theorem is therefore
equivalent to requiring $L(\phi)$ to be an independent set of
$G_{\mathcal T}$, meaning that no two vertices in $L(\phi)$ are joined by an edge, for almost
every $\phi$.  For instance, intervals ordered as $x$ followed by $y$ realize
$\{\varnothing,\{x\},\{x,y\}\}$ but not $\{y\}$.

The criterion can be applied before evaluating any entropy.  To see this,
consider
\begin{align}
 Q_5={}&S(0ab)+S(0ac)+S(0ad)+S(0bc)+S(0bd) \notag\\
 &-S(0a)-S(0b)-S(0cd)-S(0abc)-S(0abd).
 \label{eq:Q5-legality}
\end{align}
Its term list separates every pair among $a,b,c,d$, so
$G_{\mathcal T}=K_4$, the complete graph on four labels, and
\begin{equation}
 f_i(\phi)f_j(\phi)=0
 \quad\text{a.e. for every }i\ne j.
 \label{eq:Q5-disjoint-support}
\end{equation}
The Fourier ansatz
\begin{equation}
 f_a=-q\cos2\phi,\qquad f_b=u,\qquad
 f_c=q\cos2\phi,\qquad f_d=u,
 \label{eq:illegal-one-mode}
\end{equation}
with $u\geq0$ and $q\in\mathbb R$ fails this condition.  For $q\ne0$,
$f_a$ and $f_c$ change sign, and the profiles $R-q\cos2\phi$ and $R$
exchange order.  For $q=0$, $f_b$ and $f_d$ overlap unless $u=0$.
Positive offsets $p_i+q_i\cos(2\phi-\alpha_i)$ do not help: for
$p_i\geq|q_i|$ they are positive almost everywhere, whereas
\eqref{eq:Q5-disjoint-support} permits at most one nonzero thickness on each
generator.  Thus this family is realizable only at $u=q=0$.

A useful sufficient construction assigns different labels to disjoint sets
of generators.  Let $g_i\geq0$ be smooth functions with pairwise disjoint
closed angular supports and $\epsilon_i\geq0$.  Define
\begin{equation}
 \gamma_T^{(R)}(\phi)
 =R\left(1+\sum_{i\in T}\epsilon_i g_i(\phi)\right).
 \label{eq:legal-bump-family}
\end{equation}
Here $R$ is the overall size and $\epsilon_i$ are outward amplitudes at fixed
$R$.  The base region ends at $\varrho=R$, while label $i$ occupies the
radial interval $R<\varrho\leq R(1+\epsilon_i g_i(\phi))$.  At most one
such interval is present on each generator, so
\eqref{eq:legal-bump-family} is realizable for any term list.  This is the
family used below.  The same family appears in the common-light-cone
majorization tests of
\cite{Grimaldi:2025hec,Grimaldi:2026com,Czech:2026maj}.

The cuts lie on a null cone, whereas entropy is evaluated on a spacelike
Cauchy slice.  We regulate the cone by
\begin{equation}
 t=f_{\eta,R}(\varrho)
 :=\sqrt{\varrho^2+(\eta R)^2}-\eta R,
 \qquad \eta>0.
 \label{eq:spacelike-cone-regulator-short}
\end{equation}
This regulator profile obeys $|f'_{\eta,R}|<1$.  After it is flattened beyond all cuts,
it defines a complete boundary Cauchy slice $\Sigma_{\eta,R}$.  The
regulated region is
\begin{equation}
 A_\gamma^{(\eta,R)}
 =\bigl\{(f_{\eta,R}(\varrho),\varrho,\phi):
 0\leq\varrho\leq\gamma(\phi)\bigr\}
 \subset\Sigma_{\eta,R}.
 \label{eq:spacelike-cut-approximant-short}
\end{equation}
All unions and intersections are taken at fixed $\eta$.  For a balanced
functional $Q$, define
\begin{equation}
 Q^{\rm null}(R):=\lim_{\eta\downarrow0}Q^{(\eta)}(R),
 \label{eq:controlled-null-gap-short}
\end{equation}
assuming that the limit exists and is regulator independent.  This is part
of the null-limit prescription.  It puts all boundary regions on one Cauchy
slice before the limit, but it does not make their HRT surfaces minimize on
one bulk slice.  The additional common-slice condition used for $G_6$ is
stated in Section~\ref{sec:finite-g6}.

\subsection{Pair rigidity and local response}
\label{sec:pair-rigidity}

Having constructed a common-atom-realizable deformation family, we now ask
what a general centered inequality can detect near the undeformed cut.  Here
\emph{pair rigidity} means that a nonzero balanced functional valid on graph
entropy vectors cannot have all of its two-party cumulative coefficients vanish.  The answer
is most transparent in terms of cumulative coefficients, which collect the
coefficients of all entropy terms containing a prescribed set of outer
parties.  Let $0$ denote the central region and
let $T\subseteq[n]$ label a union of the remaining regions.  A centered
functional is
\begin{equation}
 Q_c[S]=\sum_{T\subseteq[n]}c_T S(0T),
 \qquad Q_c[S]\geq0.
 \label{eq:centered-Q}
\end{equation}
In this subsection, \eqref{eq:centered-Q} is required to be nonnegative only
on classical holographic graph entropy vectors.  We also impose balance,
\begin{equation}
 \sum_Tc_T=0,
 \qquad
 \sum_{T\ni i}c_T=0\quad(i=1,\ldots,n).
 \label{eq:centered-balance-short}
\end{equation}
For $U\subseteq[n]$, define the cumulative coefficient
\begin{equation}
 m_U=\sum_{T\supseteq U}c_T.
 \label{eq:upper-incidence-short}
\end{equation}
Balance gives $m_\varnothing=m_i=0$.  We refer to $m_{ij}$ as the pair
coefficient and to the collection $\{m_{ij}\}_{i<j}$ as the pair projection
of $Q_c$.

For a balanced graph inequality, region dominance states that the sum of
positive coefficients of terms containing any fixed set $U$ cannot exceed the
absolute value of the sum of their negative coefficients.  Since $m_U$ is
their signed difference, this
implies $m_U\leq0$ for every $|U|\geq2$
\cite{Grimaldi:2026com}.  This does not by itself exclude all $m_{ij}=0$
with some $m_U\ne0$ for $|U|\geq3$.  The following theorem does.

\begin{theorem}
\label{thm:strict-pair}
Suppose \eqref{eq:centered-Q} is nonnegative on every classical holographic
graph entropy vector and obeys \eqref{eq:centered-balance-short}.  Then
\begin{equation}
 m_{ij}\leq0\qquad(i\ne j).
 \label{eq:pair-sign-short}
\end{equation}
If $c\ne0$, then $m_{ij}<0$ for some $i\ne j$.
\end{theorem}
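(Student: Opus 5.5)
The plan is to test the inequality on a few explicit weighted graphs whose entropy vectors have a simple \emph{Möbius structure} in the outer labels. If $S(0T)=h(T)$ and we expand $h(T)=\sum_{V\subseteq T}a_V$, then
\begin{equation*}
Q_c=\sum_T c_T\sum_{V\subseteq T}a_V=\sum_V a_V m_V .
\end{equation*}
By \eqref{eq:centered-balance-short}, the terms with $|V|\le1$ drop out, so any $T$-independent contributions and any contributions linear in the indicators $\mathbf 1[i\in T]$ are invisible. Isolated terminals, including $0$ itself whenever it is isolated, contribute only constants. It therefore suffices to build small gadgets supported on a subset $U\subseteq[n]$.

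\emph{Step 1 (sign of pair coefficients).} For fixed $i\ne j$, take one bulk vertex $v$. Join it to $i$ and $j$ with very large weight and to the purifier with weight $1$, and leave all other terminals isolated. Then $v$ sits on the side of $i$ and $j$ unless neither lies in $T$, so
\begin{equation*}
S(0T)=\text{const}+\mathbf 1[T\cap\{i,j\}\ne\varnothing]=\text{const}+\mathbf 1[i\in T]+\mathbf 1[j\in T]-\mathbf 1[\{i,j\}\subseteq T].
\end{equation*}
Hence $Q_c=-m_{ij}\ge0$, which is \eqref{eq:pair-sign-short}.

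\emph{Step 2 (strictness, by contradiction).} Suppose all $m_{ij}=0$. Möbius inversion gives $c_T=\sum_{U\supseteq T}(-1)^{|U\setminus T|}m_U$, so $c\ne0$ forces some $m_U\ne0$ with $|U|\ge3$. I will show inductively in $k=|U|$ that $m_U=0$ for every $|U|\ge2$.

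The tool is a one-vertex gadget. Join $v$ to each $i\in U$ with weight $1$ and to the purifier (or to $0$) with weight $w$. With $t=|T\cap U|$, the cut function is $\min(t,k-t+w)$ or $\min(k-t,t+w)$. This is a concave function of $t$, and the induction hypothesis kills every $m_V$ with $2\le|V|<k$, so $Q_c$ reduces to a multiple of the top finite difference times $m_U$. For example, weight $w=k-2$ to $0$ gives
\begin{equation*}
S(0T)=\text{const}-t-\mathbf 1[T\cap U=\varnothing],\qquad Q_c=(-1)^{k+1}m_U\ge0 .
\end{equation*}
Combined with region dominance $m_U\le0$ \cite{Grimaldi:2026com}, this forces $m_U=0$ for every odd $k$.

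\emph{Main obstacle: even $|U|$.} For even $k$, every single-vertex concave gadget I have tried returns only $-m_U\ge0$, which region dominance already supplies. The first thing I would try is a top-down induction instead of a bottom-up one. Choose $U$ maximal with $m_U\ne0$, so that $m_W=0$ for all $W\supsetneq U$. Then use gadgets in which the labels outside $U$ are tied with heavy edges to $0$ or to the purifier, or attached to a second bulk vertex. The aim is an $h$ whose Möbius expansion puts a positive coefficient on $m_U$ and otherwise involves only supersets of $U$ or pairs.

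Failing that, I would use a two-vertex gadget on $U=U_1\sqcup U_2$ with $|U_1|,|U_2|$ odd. Its cut function has the product form $h_1(T\cap U_1)\,h_2(T\cap U_2)$, arranged through a min-cut of a series connection, so that the top Möbius coefficient is the product of two odd-level coefficients and has the opposite sign. This would give $m_U\ge0$ and hence $m_U=0$. With the even case settled, every $m_U$ with $|U|\ge2$ vanishes, so $c=0$, contradicting $c\ne0$.
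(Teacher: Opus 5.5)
\paragraph{The even-$|U|$ case is left open.}
Your setup matches the paper's proof: the M\"obius expansion $Q_c=\sum_V a_V m_V$, balance removing all $|V|\le1$, a minimal $U$ with $m_U\ne0$, and one-vertex star gadgets that reduce $Q_c$ to the top finite difference times $m_U$. As written, however, the strictness claim is not proved, because the case of even $|U|$ is never closed. Neither fallback closes it. The top-down induction is not carried out. The product-form gadget is not something min-cuts give you: cut functions are minima of sums of edge terms. Moreover, a product $h_1(T\cap U_1)\,h_2(T\cap U_2)$ of your rise-then-fall concave gadgets would violate submodularity of $T\mapsto S(0T)$ across $U_1,U_2$.

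\paragraph{The fix is to move the kink.}
Your one-vertex cut function is an affine function of $t=|T\cap U|$ minus $2(t-t^*)_+$. Here $t^*=(k-w)/2$ when $v$ is tied to $0$, and $t^*=(k+w)/2$ when it is tied to the purifier, so any kink position is available. Balance and minimality then give $Q_c=-2\,\Delta^k h_{t^*}(0)\,m_U$ with $h_{t^*}(r)=(r-t^*)_+$.
\begin{itemize}
\item You only used $t^*=1$, where $\Delta^k h_1(0)=(-1)^k$.
\item At $t^*=2$ one has $\Delta^k h_2(0)=(-1)^k(2-k)$, which has the opposite sign for every $k\ge3$.
\end{itemize}
For example, with $k=4$, join $v$ to the four terminals of $U$ with unit weight and nothing else. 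Then $h=\min(t,4-t)$ has $\Delta^4 h(0)=4$, so $Q_c=4m_U\ge0$. In general, weight $k-4$ to $0$ gives $Q_c=2(k-2)m_U$ for even $k\ge4$. Together with your $t^*=1$ gadget, this forces $m_U=0$ for every $k\ge3$, odd or even. The $t^*=2$ gadget also supplies $m_U\le0$ for odd $k$, so you no longer need to import region dominance; for $k=3$, use weight $1$ to the purifier. This two-kink comparison, $t\in(0,1)$ versus $t=2$, is exactly the paper's proof.

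\paragraph{Minor slips.}
Two smaller slips do not affect any sign.
\begin{itemize}
\item In Step~1, you use very large weight $W$ on the edges $v$--$i$ and $v$--$j$. A region containing exactly one of $i,j$ must then cut a heavy edge, so $S(0T)=W$ there, and the pair coefficient is $1-2W$ rather than $-1$. The conclusion $m_{ij}\le0$ still follows, and unit weights on all three edges give exactly the formula you wrote.
\item In the $w=k-2$ gadget the indicator term is $-2\,\mathbf 1[T\cap U=\varnothing]$, so $Q_c=2(-1)^{k+1}m_U$.
\end{itemize}
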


\begin{proof}
Fix $U\subseteq[n]$, write $k=|U|$, and consider a star graph with one
internal vertex connected directly to every terminal leaf.  Give the edge
ending on each leaf in $U$ unit weight, the edges ending on the central and
purifier leaves weights $w_0,w_p>0$, and every other leaf edge weight
$\delta>0$.  Apply
\eqref{eq:centered-Q} before taking $\delta\downarrow0$.

The limiting cut for $T$ depends only on $x=|T\cap U|$:
\begin{equation}
 S(0T)=\min\{w_0+x,w_p+k-x\}
 =w_0+x-2(x-t)_+,
 \qquad t=\frac{w_p+k-w_0}{2}.
 \label{eq:star-hinge-short}
\end{equation}
Here $(y)_+:=\max\{y,0\}$.  Adding a sufficiently large common shift to
$w_0,w_p$ leaves $t$ unchanged,
so every $t\in\mathbb R$ is available with positive weights.  Balance removes
$w_0+x$, and positivity of the remaining positive-part term gives
\begin{equation}
 F_U(t):=\sum_Tc_T(|T\cap U|-t)_+\leq0.
 \label{eq:star-test-short}
\end{equation}
Set $h_t(r)=(r-t)_+$ and
$a_\ell(t)=\Delta^\ell h_t(0)$, where
$\Delta h(r)=h(r+1)-h(r)$.  The finite-difference identity
$h_t(|T\cap U|)=\sum_{V\subseteq T\cap U}a_{|V|}(t)$ then gives
\begin{equation}
 F_U(t)=\sum_{V\subseteq U}a_{|V|}(t)m_V.
 \label{eq:newton-star-short}
\end{equation}
For $U=\{i,j\}$ and $0<t<1$, balance removes the empty and singleton terms.
Since $a_2(t)=t$, \eqref{eq:star-test-short} gives
$t\,m_{ij}\leq0$.

It remains to exclude a nonzero functional with all $m_{ij}=0$.
Because $m_U=\sum_{T\supseteq U}c_T$, the inverse finite-subset transform is
$c_T=\sum_{U\supseteq T}(-1)^{|U|-|T|}m_U$; this is M\"obius inversion
on the subset lattice.  We may therefore choose a set $U$ of minimal size
with $m_U\ne0$.  Balance and the assumed vanishing of all pair
coefficients imply $k=|U|\geq3$.  Minimality reduces
\eqref{eq:newton-star-short} to $F_U(t)=a_k(t)m_U$.

The coefficient $a_k(t)$ changes sign between the following two choices:
\begin{equation}
 a_k(t)=(-1)^k t\quad(0<t<1),
 \qquad a_k(2)=(-1)^k(2-k).
 \label{eq:newton-sign-change}
\end{equation}
For $k\geq3$, the two values have opposite signs, so a nonzero $m_U$ is
incompatible with \eqref{eq:star-test-short} at both values of $t$.
Therefore some $m_{ij}$ is nonzero and, by
\eqref{eq:pair-sign-short}, strictly negative.
\end{proof}

The same conclusion holds for nonnegative linear combinations of balanced
coefficient arrays whose associated functionals are valid on graph entropy
vectors.

\paragraph{Quadratic response.}

The coefficient theorem is only half of the argument.  We now translate it
into a statement about entanglement response using a one-sided $C^2$
expansion and SSA.  This step should be distinguished from positivity of the
full HRT inequality at finite deformation amplitude: it constrains the local
boundary variation even when a
covariant proof of the complete inequality requires additional geometric
input.

We use the disjoint-support profiles \eqref{eq:legal-bump-family}, with $f_i=R g_i$.  Hold $R$ fixed and vary only the outward amplitudes.  For every
subset appearing in $Q_c$, assume that
$S[\gamma_0+\sum_i\lambda_i f_i]$ admits the one-sided $C^2$ expansion
\begin{align}
 S\!\left[\gamma_0+\sum_i\lambda_i f_i\right]
 ={}&S[\gamma_0]+\sum_i\lambda_i DS(f_i) \notag\\
 &+\frac12\sum_{i,j}\lambda_i\lambda_jD^2S(f_i,f_j)
 +o(|\lambda|^2),
 \label{eq:shape-Taylor-short}
\end{align}
near $\lambda=0$, with $\lambda_i\geq0$.  Nonnegative amplitudes keep the
deformations within the family constructed in
\eqref{eq:legal-bump-family}.

With a common amplitude,
\begin{equation}
 Q_c(\epsilon)=\sum_Tc_T
 S\!\left[\gamma_0+\epsilon\sum_{i\in T}f_i\right].
 \label{eq:bump-Q-short}
\end{equation}
Balance cancels the constant, linear, and diagonal quadratic terms, leaving
\begin{equation}
 Q_c''(0^+)=2\sum_{i<j}m_{ij}D^2S(f_i,f_j).
 \label{eq:bump-Hessian-short}
\end{equation}
Only the pair coefficients $m_{ij}$ remain.

For two disjoint outward deformations, SSA gives
\begin{equation}
 S[\gamma_0+\epsilon f_i]+S[\gamma_0+\epsilon f_j]
 -S[\gamma_0]-S[\gamma_0+\epsilon(f_i+f_j)]\geq0.
 \label{eq:bump-SSA-short}
\end{equation}
Define the pair susceptibility, or mixed response,
\begin{equation}
 \chi_{ij}:=-D^2S(f_i,f_j).
 \label{eq:pair-chi-short}
\end{equation}
Equivalently, it integrates the bilocal kernel
$-\delta^2S/[\delta\gamma(\phi)\delta\gamma(\phi')]$, often called the
off-diagonal entanglement density, against $f_i$ and $f_j$
\cite{Nozaki:2013entanglementdensity,Faulkner:2015shape}.
Equation~\eqref{eq:bump-SSA-short} gives $\chi_{ij}\geq0$.  Together with
Theorem~\ref{thm:strict-pair}, this gives
\begin{equation}
 \boxed{\frac12Q_c''(0^+)
 =\sum_{i<j}(-m_{ij})\chi_{ij}\geq0.}
 \label{eq:SSA-floor-short}
\end{equation}

Thus the quadratic response depends on $c_T$ only through $m_{ij}$.  It can
vanish only if $\chi_{ij}=0$ whenever $m_{ij}<0$.  This does not eliminate
higher-party information from the finite inequality; it only shows that the
quadratic deformation does not detect it.

For independent outward amplitudes $\lambda_i\geq0$, define
\begin{equation}
 \widehat Q_c(\lambda)
 =\sum_Tc_T S\!\left[\gamma_0+
       \sum_{i\in T}\lambda_i f_i\right].
 \label{eq:independent-bump-Q}
\end{equation}
At the origin, the diagonal entries vanish and, for $i\ne j$,
\begin{equation}
 \left.\partial_{\lambda_i}\partial_{\lambda_j}
 \widehat Q_c\right|_{0^+}
 =(-m_{ij})\chi_{ij}\geq0.
 \label{eq:mixed-bump-constraint}
\end{equation}
Hence the Hessian $H$ is copositive on the cone of outward amplitudes:
$\lambda^TH\lambda\geq0$ whenever every $\lambda_i\geq0$.  We make no statement
about signed directions, since inward deformations need not admit a common
set of boundary regions.  This is the infinitesimal counterpart of the finite
majorization ordering for the corresponding common-light-cone configurations
\cite{Grimaldi:2025hec,Grimaldi:2026com,Czech:2026maj}.

\section{Fixed-atom obstructions to RG monotonicity}
\label{sec:obstructions}

Here an \emph{atomization} is a decomposition into mutually disjoint
elementary boundary regions whose unions form every entropy argument.  By
\emph{fixed-atom} we mean that the same finite decomposition and the same
dimensionless profiles are used as the scale varies, although the complete
configuration may dilate self-similarly.

The quadratic projection found above does not imply that every finite
holographic inequality is of Shannon type, meaning that it can be written as
a nonnegative linear combination of the elemental monotonicity and
submodularity inequalities; it states only what this particular local probe
retains.  We begin with a six-party quantity whose coefficient array contains
genuinely non-Shannon information even though its quadratic
response is pairwise.  We then test two natural algebraic strategies for
recovering the discarded data, namely deleting parties and comparing a fixed
atomization at several scales.  Finally, we reproduce the pair projection
directly from extremal surfaces in an AdS$_4$ domain wall.

\subsection{The six-party test}
\label{sec:finite-g6}

The six-party quantity $G_6$ separates the information in the finite
inequality from the information retained by the local response.  Its full
coefficient array cannot be written as a nonnegative combination of elemental
Shannon-inequality coefficient arrays, whereas its quadratic response
contains only four pair susceptibilities.  All twelve entropy terms admit a
finite-amplitude realization on one common light cone.  Positivity is direct
for the weighted graph entropy models defined above.  The covariant HRT
statement requires the additional common-slice hypothesis stated below.
Under the endpoint assumptions, the self-similar gap also vanishes at both
conformal fixed points.
Thus $G_6$ cleanly separates the genuinely higher-party information present
in the finite inequality from the pair data retained by the quadratic probe
of Section~\ref{sec:pair-rigidity}.

A known six-party holographic entropy inequality
\cite{BaoNaskar:2024cm} provides the coefficient array:
\begin{align}
&S(ABC)+S(CDE)+S(BDF)+S(AEF)+S(ACE)+S(CDF)
  +S(ACD)+S(BCEF)
\notag\\
&\qquad \geq
 S(ACDE)+S(ABCEF)+S(BCDF)+S(CD)+S(AC)+S(B)
\notag\\[-1mm]
&\hspace{42mm}{}+S(CE)+S(A)+S(EF)+S(DF).
\label{eq:g6-catalogued-parent}
\end{align}
The parent inequality is \emph{superbalanced}: every single label and every
pair of labels occur with the same total multiplicity on the two sides.  Its
null reduction on $C$, obtained by retaining only the terms containing $C$,
is therefore again valid \cite{He:2020xuo,Grimaldi:2026com}:
\begin{align}
&S(ABC)+S(CDE)+S(ACE)+S(CDF)+S(ACD)+S(BCEF)
\notag\\
&\qquad \geq
S(ACDE)+S(ABCEF)+S(BCDF)+S(CD)+S(AC)+S(CE).
\label{eq:g6-null-reduced-parent}
\end{align}
Taking $C$ as the distinguished center and relabeling the remaining parties by
\begin{equation}
 C\mapsto0,\qquad A\mapsto a,\qquad B\mapsto b,\qquad
 D\mapsto c,\qquad E\mapsto d,\qquad F\mapsto e,
\label{eq:g6-relabeling}
\end{equation}
gives
\begin{equation}
\boxed{
\begin{aligned}
G_6:={}&S(0ab)+S(0ac)+S(0ad)+S(0cd)+S(0ce)+S(0bde)
\\
&-S(0a)-S(0c)-S(0d)-S(0acd)-S(0bce)-S(0abde).
\end{aligned}}
\label{eq:g6-centered-short}
\end{equation}
This algebraic reduction fixes the coefficient array.  The direct positivity
argument below does not invoke a physical limit of the terms removed from
\eqref{eq:g6-catalogued-parent}.  Balance follows because every label has the
same multiplicity on both sides.  For the ordinary labels,
\begin{equation}
 (n_a,n_b,n_c,n_d,n_e)=(3,2,3,3,2),
\label{eq:g6-occurrence-vector-short}
\end{equation}
while the center $0$ occurs six times on each side.

The coefficient array nevertheless contains information beyond the elemental
Shannon inequalities.  In the polymatroid language, one works with normalized
set functions, $r(\varnothing)=0$, while the elemental inequalities express
monotonicity, $r(U)\leq r(V)$ for $U\subseteq V$, and submodularity,
$r(U)+r(V)\geq r(U\cap V)+r(U\cup V)$, the set-function form of SSA.  Any
nonnegative sum of the latter inequalities is nonnegative on every normalized,
monotone, submodular rank function, so a single negative value on such a
function rules out a Shannon-type decomposition.

A concrete separating example is the rank function of the rank-four V\'amos
matroid \cite{Dougherty:2007vamos}.  A matroid is a finite abstraction of
linear independence, and its rank $r(X)$ is the maximum size of an independent
subset of $X$.  The V\'amos matroid is a standard eight-element example of
total rank four which cannot be represented by vectors over any field.  Group
its eight elements into the two-element blocks $a,b,c,d$, with ranks
\begin{align}
 r(a)=r(b)=r(c)=r(d)&=2,
 \label{eq:g6-vamos-singletons-short}\\
 r(ab)=r(ac)=r(ad)=r(bd)=r(cd)&=3,
 \qquad r(bc)=4.
 \label{eq:g6-vamos-pairs-short}
\end{align}
Every union of at least three blocks has rank four.  Adjoin $0$ and $e$ as
\emph{loops}, meaning rank-zero elements whose inclusion never changes $r$.
Substitution into \eqref{eq:g6-centered-short} gives
\begin{align}
 G_6[r]
 &=(3+3+3+3+2+3)-(2+2+2+4+4+4)
 \notag\\
 &=17-18=-1.
 \label{eq:g6-vamos-gap-short}
\end{align}
The positive-coefficient terms sum to $17$ and the negative-coefficient terms
to $18$.  Thus this rank function is a \emph{witness} in the precise sense
that it obeys every elemental Shannon constraint but gives $G_6[r]=-1$.
It therefore certifies that $G_6$ has no Shannon-type decomposition.  This is
only an abstract polymatroid test point, not an entropy vector realized by the
light-cone regions below.  Whether nonnegative
combinations of SSA generate $G_6$ on that physical family remains open.

A separate finite-amplitude construction realizes the twelve entropy terms
geometrically.  Choose a smooth positive base profile
$\gamma_0^{(R)}$ and five smooth nonnegative bumps $f_i^{(R)}$,
$i\in\{a,b,c,d,e\}$, with pairwise disjoint angular supports:
\begin{equation}
 f_i^{(R)}(\phi)\geq0,\qquad
 \operatorname{supp}f_i^{(R)}\cap
 \operatorname{supp}f_j^{(R)}=\varnothing
 \quad(i\neq j).
\label{eq:g6-disjoint-supports-short}
\end{equation}
For each subset of ordinary labels, define
\begin{equation}
 \gamma_T^{(R)}(\phi)
 =\gamma_0^{(R)}(\phi)+\sum_{i\in T}f_i^{(R)}(\phi).
\label{eq:g6-cut-family-short}
\end{equation}
At most one bump is active on each null generator.  The active subsets
therefore obey the chain condition of Theorem~\ref{thm:fiberwise-chain}, and
one set of disjoint atoms realizes all twelve terms at finite amplitude.
Following Section~\ref{sec:legality}, the unions are formed first on
$\Sigma_{\eta,R}$ and then taken to the balanced null limit
\eqref{eq:controlled-null-gap-short}.

Common atomization establishes geometric compatibility but not the
inequality.  For a weighted graph, positivity can be established by a
\emph{contraction certificate}.  This is a map between the bit strings which
encode membership in the positive- and negative-side cuts.  Its terminal
conditions require the occurrence bit string of each boundary terminal,
including the purifier, to map to the prescribed string on the other side.  Contraction
means that the map does not increase the coefficient-weighted Hamming
distance, namely the total weight of the coordinates on which two strings
differ.  Such a map glues pieces of minimum cuts for one side into admissible
comparison cuts for the other side with no greater total capacity.

\begin{proposition}[Six-party graph inequality]
\label{prop:g6-graph-inequality-short}
Let $0,a,b,c,d,e$ label disjoint boundary parties, represented by terminal
vertices of a single weighted graph whose minimum cuts reproduce the twelve
entropies in \eqref{eq:g6-centered-short}.  Then
\begin{equation}
 G_6\geq0.
 \label{eq:g6-graph-positive-short}
\end{equation}
The same conclusion holds for RT entropies when the relevant surfaces are
simultaneously globally minimizing on one Riemannian bulk slice
\cite{Ryu:2006bv}.
\end{proposition}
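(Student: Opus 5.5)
The plan is to prove $G_6\geq0$ on graph entropy vectors with the standard proof-by-contraction method, as the paragraph preceding the proposition suggests.

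\emph{Setup.} Write the positive side of \eqref{eq:g6-centered-short} as six terms $L_1,\dots,L_6=0ab,0ac,0ad,0cd,0ce,0bde$, and the negative side as $R_1,\dots,R_6=0a,0c,0d,0acd,0bce,0abde$, all with unit coefficients. Each terminal $t\in\{0,a,b,c,d,e,p\}$, with $p$ the purifier, gets an occurrence string $x^{(t)}\in\{0,1\}^6$ on the left, with $x^{(t)}_k=1$ iff $t\in L_k$, and an analogous string $y^{(t)}$ on the right. Thus $x^{(0)}=111111$, $x^{(a)}=111000$, $x^{(b)}=100001$, $x^{(c)}=010110$, $x^{(d)}=001011$, $x^{(e)}=000011$, $x^{(p)}=000000$. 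On the right, $y^{(0)}=111111$, $y^{(a)}=100101$, $y^{(b)}=000011$, $y^{(c)}=010110$, $y^{(d)}=001101$, $y^{(e)}=000010$, $y^{(p)}=000000$.

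\emph{Find a contraction.} I will look for a map $\phi:\{0,1\}^6\to\{0,1\}^6$ with $\phi(x^{(t)})=y^{(t)}$ for every terminal $t$ and $d_H(\phi(x),\phi(x'))\leq d_H(x,x')$ for all $x,x'$. It suffices to check pairs at Hamming distance one, since the Hamming metric is geodesic on the hypercube. Given such a $\phi$, the usual argument gives the inequality. Take minimum cuts $W_k$ for the $L_k$ in an arbitrary weighted graph. Assign each vertex $v$ the string $x(v)$ with $x_k(v)=[v\in W_k]$, and define candidate cuts $W'_\ell=\{v:\phi(x(v))_\ell=1\}$. The terminal conditions make $W'_\ell$ an admissible cut for $R_\ell$. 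Summing over edges, contraction gives $\sum_\ell|\partial W'_\ell|\leq\sum_k|\partial W_k|$, so $\sum S(R_\ell)\leq\sum S(L_k)$.

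\emph{Construct $\phi$.} The search space is only $2^6=64$ inputs, so the construction is a finite extension problem. I will build $\phi$ greedily in order of distance from the terminal strings. Each string is assigned an output that is, among candidates within the required Hamming distances of all already-assigned neighbours, closest to a majority or ``nearest-terminal'' guess, backtracking when no candidate fits. In practice this is done by a SAT-style search, as in the holographic-entropy-cone literature. The inequality is known from \cite{BaoNaskar:2024cm}, and a contraction for the six-party parent \eqref{eq:g6-catalogued-parent} presumably exists there. As an alternative route, I can take the parent's contraction and restrict it to the coordinates containing $C$. That restriction is exactly how null reduction preserves validity \cite{He:2020xuo}, and it gives a contraction for $G_6$ directly. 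I expect this explicit construction and its verification, the $64\cdot6$ edge checks, to be the main obstacle. Most of the effort is bookkeeping, presented as a table in an appendix.

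\emph{RT extension.} For the RT statement, assume the twelve minimal surfaces are simultaneously globally minimizing on one Riemannian bulk slice. Then the homology regions play the role of the cuts $W_k$. Each bulk point receives a bit string by region membership, and the preimages under $\phi$ give homology regions for the $R_\ell$. Their boundary areas are bounded by the same edge-by-edge argument, now applied to codimension-one interfaces between bulk domains. Minimality of the right-hand RT surfaces then gives $G_6\geq0$, exactly as in the original proof-by-contraction for MMI \cite{Bao:2015bfa}.
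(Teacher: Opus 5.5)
Your route is the paper's own: its proof only asserts that a $1$-Lipschitz map with the correct terminal assignments exists and then invokes proof by contraction. Your RT paragraph is the standard extension. The problem is in the data you would hand to the search, because two occurrence strings are wrong. The party $d$ lies in $0ad,0cd,0bde$, so $x^{(d)}=001101$, not $001011$. The party $e$ lies in both $0bce$ and $0abde$, so $y^{(e)}=000011$, not $000010$; your string has weight one, although $e$ occurs twice on each side.

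With your data no contraction exists. You have $d_H(x^{(d)},x^{(e)})=1$ but $d_H(y^{(d)},y^{(e)})=4$, and $d_H(x^{(e)},x^{(0)})=4<5=d_H(y^{(e)},y^{(0)})$, so the SAT search would simply fail. After the correction, all $21$ pairwise terminal constraints hold, and the search is much smaller than you expect. Since $000000\mapsto000000$ and $111111\mapsto111111$, the two Lipschitz bounds against these terminals force $|\phi(x)|=|x|$ for every $x$. Hence $\phi$ must be a monotone map of the Boolean lattice that sends covering pairs to covering pairs. Read the strings as subsets of $\{1,\dots,6\}$. Monotonicity then fixes most values at once, e.g.\ $\phi(\{1\})\subseteq\phi(\{1,2,3\})\cap\phi(\{1,6\})=\{6\}$. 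A consistent completion exists: for instance, every three-element set other than $\{1,2,3\},\{2,4,5\},\{3,4,6\}$ can be sent to $\{4,5,6\}$.

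Your fallback, restricting a contraction for the six-party parent to the $C$-containing coordinates, does not work as stated. Restriction requires freezing the two discarded input bits, which in $G_6$ labels are the terms $bce$ and $ade$. The frozen map is still $1$-Lipschitz. However, it inherits the terminal condition only for terminals whose discarded bits equal the frozen values. These bits are $(0,0)$ for $0$ and $p$, $(0,1)$ for $a$ and $d$, $(1,0)$ for $b$ and $c$, and $(1,1)$ for $e$, so no single choice covers all seven terminals. Your attribution of this mechanism to null reduction is likewise unsupported.

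What actually makes the reduction valid here is an identity. After relabeling, the parent (written as left side minus right side) equals $G_6+S(bce)+S(ade)-S(a)-S(b)-S(ce)-S(de)$, so
\[
 G_6=(\text{parent})+I(a:de)+I(b:ce).
\]
Hence $G_6\geq0$ follows from the parent and subadditivity on any weighted graph, or for RT surfaces on a common static slice, with no new contraction needed.
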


\begin{proof}
An explicit contraction map between the six-bit cut-membership strings for
the positive and negative terms obeys the required terminal assignments and
is $1$-Lipschitz in the corresponding Hamming metrics.  The
proof-by-contraction theorem \cite{Bao:2015bfa,BaoNaskar:2024cm} then gives
the result.
\end{proof}

The graph result applies whenever one weighted graph reproduces all twelve
entropies.  The RT corollary similarly requires the relevant surfaces to
minimize on one Riemannian bulk slice.  A common boundary atomization does not
ensure a common bulk Cauchy slice for covariant HRT surfaces.  Here the
common-slice hypothesis is that the six negative-coefficient HRT surfaces
minimize simultaneously on one bulk Cauchy slice.  The positive-coefficient
surfaces need only admit comparison surfaces on that slice, supplied by Wall's
maximin construction.  This construction obtains an HRT surface by minimizing
area on each bulk Cauchy slice and then maximizing that minimum over slices
\cite{Wall:2012uf}.  Because $0acd,0bce,0abde$ are crossing regions -- each pair
overlaps but is neither nested nor disjoint -- this hypothesis is stronger than
the usual common-slice result for nested or disjoint regions
\cite{Rota:2017ubr}, but weaker than simultaneous minimization of all twelve
surfaces.  Under this hypothesis, cut-and-paste of the positive-side
comparison surfaces gives candidates for the negative-side regions with no
larger total area, and hence $G_6\geq0$.  Weighted min-cut graphs give the
alternative in
Proposition~\ref{prop:g6-graph-inequality-short}
\cite{GradoWhite:2024gtx,GradoWhite:2025zqw}.

A self-similar family turns the gap into a function of scale without changing
the dimensionless shape:
\begin{equation}
 \gamma_0^{(R)}(\phi)=R\,g_0(\phi),
 \qquad
  f_i^{(R)}(\phi)=R\,f_i^{(1)}(\phi),
\label{eq:g6-self-similar-short}
\end{equation}
with fixed dimensionless profiles, and set
$G_6(R):=G_6^{\rm null}(R)$.  The function is nonnegative at every scale where
the common-slice and regulated null-limit assumptions above hold.

The quadratic response retains only four entries of the non-Shannon
coefficient array.  The ten pair sums $m_{ij}$ are
\begin{center}
\begin{tabular}{crrrrrrrrrr}
\toprule
pair & $ab$&$ac$&$ad$&$ae$&$bc$&$bd$&$be$&$cd$&$ce$&$de$\\
\midrule
$m_{ij}(G_6)$&$0$&$0$&$-1$&$-1$&$-1$&$0$&$-1$&$0$&$0$&$0$\\
\bottomrule
\end{tabular}
\end{center}
Only $ad$, $ae$, $bc$, and $be$ are nonzero.  Scale all five bumps by a common
outward amplitude $\epsilon\geq0$.  Assume that, for sufficiently small $\epsilon\geq0$, the twelve entropies
remain on a single smooth extremal-surface phase, or branch, with no change
of minimizing topology.  We further assume that each admits a $C^2$
expansion from the right: writing $S_T(\epsilon)$ for any one of the twelve
entropy terms,
\[
 S_T(\epsilon)
 =S_T(0)+\epsilon S_T'(0^+)
 +\frac12\epsilon^2 S_T''(0^+)+o(\epsilon^2),
 \qquad \epsilon\to0^+.
\]
Here ``from the right'' means that $\epsilon$ approaches zero through
positive values, corresponding to outward deformations; no extension to
$\epsilon<0$ is assumed.  Then Eq.~\eqref{eq:SSA-floor-short} gives
\begin{equation}
 \frac12G_6''(0^+)
 =\chi_{ad}+\chi_{ae}+\chi_{bc}+\chi_{be}\geq0.
 \label{eq:g6-local-floor-short}
\end{equation}
Each $\chi_{ij}$ is an SSA-controlled pair susceptibility.  No other
coefficient data enter the quadratic response.  The derivatives in
\eqref{eq:g6-local-floor-short} are with respect to the bump amplitude
$\epsilon$, not the RG scale $R$.

The full gap vanishes at a conformal fixed point.  In three boundary
dimensions, the common-light-cone entropy has the form
\cite{Casini:2017vbe,Casini:2017roe,Casini:2018kzx}
\begin{equation}
 S_*[\gamma]=\alpha_\delta\,\mathcal P[\gamma]-F_*.
\label{eq:g6-cft-entropy-short}
\end{equation}
Here $\mathcal P[\gamma]$ is the regulator-dependent local perimeter, or
cut, functional.  For the standard common-light-cone regulator, it is
additive over null generators and affine, meaning linear up to a constant, in
the cut profile.  The multiplicities in
\eqref{eq:g6-occurrence-vector-short} cancel this local term generator by
generator, while the six terms on each side cancel $F_*$.  Hence
\begin{equation}
 G_6\big|_{\mathrm{CFT}}=0.
\label{eq:g6-cft-zero-short}
\end{equation}
The endpoint limits require an additional convergence assumption: the
renormalized gap for \eqref{eq:g6-self-similar-short} must approach the
corresponding light-cone entropy functionals of the UV and IR CFTs on
connected boundary cuts, with
sufficient uniformity to exchange the null and endpoint limits.  Then
\begin{equation}
 \lim_{R\to0}G_6(R)=0,
 \qquad
 \lim_{R\to\infty}G_6(R)=0.
\label{eq:g6-endpoints-short}
\end{equation}
If the common-slice hypothesis holds throughout the flow, $G_6(R)$ is
nonnegative at every scale and vanishes at both endpoints.  Any monotone
function with these properties must vanish identically.  Thus $G_6$ can
provide a nonnegative fixed-scale constraint but cannot itself furnish a
nontrivial one-sided RG monotone under these endpoint assumptions.  This
closes the most direct attempt to promote the finite gap into a flow function.

\subsection{Deletion and finite scale differences}
\label{sec:remove-pair}

The $G_6$ example contains finite coefficient data invisible to the quadratic
response.  Two finite constructions might retain this information.  The first
deletes selected parties and asks whether a genuinely multipartite inequality
survives after the pair contribution disappears.  The second combines
nonnegative graph inequalities and conditional mutual informations at several
scales in search of a finite difference $M(R)-M(\lambda R)$.  Both questions
are algebraic statements about entropy coefficients; an application to HRT
entropies requires separate geometric assumptions.  We begin with exact
deletion and then turn to the scale construction.

\paragraph{Deleting parties.}

Deletion means an exact identity of the coefficient array: the linear
combination must vanish after one or more ordinary parties are erased, not
only on a chosen state or entropy trajectory.  On the classical holographic
graph cone, this condition leaves no inequality beyond strong subadditivity.  One
or two independent deletions give nonnegative sums of conditional mutual
informations, while three force the coefficient array to vanish.

For a set of ordinary labels $L$, encode a coefficient array with common
center $0$ in the multilinear polynomial
\begin{equation}
 \mathcal C(z)=\sum_{T\subseteq L}c_Tz^T,
 \qquad z^T=\prod_{i\in T}z_i,
 \label{eq:coefficient-polynomial-short}
\end{equation}
where $0$ is implicit.  Erasing $p$ identifies $S(0Up)$ with $S(0U)$ and
therefore sets $z_p=1$.  With $C=L\setminus\{p\}$, exact deletion is
equivalent to
\begin{equation}
 \mathcal C(z)=(1-z_p)\mathcal B(z_{L\setminus\{p\}}).
 \label{eq:one-deletion-factor-short}
\end{equation}
The factor pairs terms that differ only by $p$.  We use the convention
$\mathcal B(z)=-\sum_{U\subseteq C}b_Uz^U$.  Since
\eqref{eq:one-deletion-factor-short} is a polynomial identity, it is stronger
than cancellation on a single entropy branch.

For one deleted party, every such linear combination has the form
\begin{equation}
 Q_b[S]=\sum_{U\subseteq C}b_U
 \bigl[S(0Up)-S(0U)\bigr].
 \label{eq:one-deletion-form-short}
\end{equation}
The bracket records the entropy change after adding $p$ to $0U$.  The
classification of combinations that are nonnegative on every graph entropy
vector uses the Boolean lattice $2^C$, namely the subsets of $C$ ordered by
inclusion.  An upper set
$\mathcal F\subseteq2^C$ contains every superset of each element.  Orient each
covering edge $U\to Ui$ upward, where $Ui:=U\cup\{i\}$, and for edge
weights $w$ define the net outward flow at $U$ by
\begin{equation}
 (\operatorname{div}w)_U
 =\sum_{i\notin U}w_{U,i}
 -\sum_{i\in U}w_{U\setminus\{i\},i}.
 \label{eq:boolean-div-short}
\end{equation}
Thus $w_{U,i}>0$ sends flow from $U$ to $Ui$.  Graph positivity is equivalent
to a nonnegative upward flow, which in entropy variables is a nonnegative sum
of conditional mutual informations.

\begin{theorem}[One-deletion classification]
\label{thm:one-deletion-short}
For \eqref{eq:one-deletion-form-short}, the following are equivalent:
\begin{enumerate}[label=(\roman*)]
 \item $Q_b\geq0$ on every classical holographic graph entropy vector;
 \item
 \begin{equation}
  \sum_{U\subseteq C}b_U=0,
  \qquad
  \sum_{U\in\mathcal F}b_U\leq0
  \quad\text{for every upper set }\mathcal F;
  \label{eq:upper-set-short}
 \end{equation}
 \item $b=\operatorname{div}w$ for a nonnegative upward flow;
 \item
 \begin{equation}
  Q_b=\sum_{U\subseteq C}\sum_{i\notin U}
  w_{U,i}I(p:i\mid0U),
  \qquad w_{U,i}\geq0.
  \label{eq:one-deletion-CMI-short}
 \end{equation}
\end{enumerate}
\end{theorem}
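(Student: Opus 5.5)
The plan is to prove the cycle (i)$\Rightarrow$(ii)$\Rightarrow$(iii)$\Rightarrow$(iv)$\Rightarrow$(i). The last two steps are bookkeeping, the middle one is a flow-feasibility theorem, and the first needs explicit graph test vectors.

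For (iii)$\Rightarrow$(iv), expand each conditional mutual information as
\begin{equation*}
 I(p:i\mid 0U)=\bigl[S(0Up)-S(0U)\bigr]-\bigl[S(0Uip)-S(0Ui)\bigr].
\end{equation*}
So $w_{U,i}I(p:i\mid0U)$ adds $+w_{U,i}$ to the coefficient $b_U$ of the bracket at $U$ and $-w_{U,i}$ to the coefficient at $Ui$. Summing over all edges gives exactly $b=\operatorname{div}w$ in the sense of \eqref{eq:boolean-div-short}. The converse direction is read off in the same way, so (iii)$\Leftrightarrow$(iv).

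For (iv)$\Rightarrow$(i), each $I(p:i\mid0U)$ is nonnegative on graph entropy vectors because min-cut functions are submodular (SSA). Hence any nonnegative combination is nonnegative.

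For (ii)$\Rightarrow$(iii), view $2^C$ as a directed acyclic graph with upward covering edges of unbounded capacity. Node $U$ has net supply $b_U$, and we seek a nonnegative flow with prescribed divergence. By Gale's supply--demand theorem (equivalently, max-flow/min-cut), such a flow exists iff the total supply vanishes and every set $\mathcal F$ closed under outgoing edges has net supply $\le0$. The second condition holds because no flow can leave $\mathcal F$, and Gale's theorem says it is also sufficient. With infinite capacities, the only finite cuts are exactly those $\mathcal F$ with no outgoing edge, i.e.\ the upper sets. This is precisely \eqref{eq:upper-set-short}.

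The substantive step is (i)$\Rightarrow$(ii), which I would prove by test graphs, in the spirit of the star-graph argument of Theorem~\ref{thm:strict-pair}. Write $\Delta_U:=S(0Up)-S(0U)$, so that $Q_b=\sum_U b_U\Delta_U$.
\begin{enumerate}[label=(\alph*)]
 \item Make $p$ a leaf joined by a unit edge to the purifier, so $\Delta_U=+1$ for all $U$. Then make $p$ a leaf joined to $0$, so $\Delta_U=-1$ for all $U$. These two graphs give $\sum_U b_U\ge0$ and $\le0$, hence $\sum_U b_U=0$.
 \item For an upper set $\mathcal F$, attach $p$ by a unit edge to an internal vertex $v$. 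Build a gadget so that in every minimum cut for $0U$, the vertex $v$ lies robustly on the $0U$ side iff $U\in\mathcal F$. Then $\Delta_U=-1$ for $U\in\mathcal F$ and $\Delta_U=+1$ otherwise. Using $\sum_U b_U=0$, this gives $Q_b=-2\sum_{U\in\mathcal F}b_U\ge0$, which is the required inequality.
\end{enumerate}

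The gadget must realize the monotone Boolean function $U\mapsto[U\in\mathcal F]$ as the side on which $v$ falls in a minimum cut. I would use a DNF over the minimal elements $M$ of $\mathcal F$. Each $M$ gets an AND-node $w_M$, joined to every terminal $i\in M$ with weight $K$ and to $v$. The vertex $v$ is tied to $0$ and to the purifier by weights chosen so that $v$ switches sides iff some $w_M$ is pulled to the $0U$ side. All weights are arranged in a separated hierarchy $1\ll K\ll K'$, so that the unit $p$-edge never affects the placement of $v$.

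This weight bookkeeping is where I expect the main difficulty. One must check that a single missing $i\in M$ suffices to keep $w_M$ on the complement side, and that ties never occur. If the DNF gadget becomes unwieldy, the fallback is to note that only the extreme rays of the dual cone matter. These rays are the indicators of principal upper sets $\{U\supseteq M\}$. For those, a single AND-node over $M$, in a star-like graph analogous to \eqref{eq:star-hinge-short}, should suffice, with the threshold $t$ tuned as in the proof of Theorem~\ref{thm:strict-pair}.
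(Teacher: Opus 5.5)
Your route is the paper's: (iii)$\Rightarrow$(iv)$\Rightarrow$(i) from $I(p:i\mid0U)=\Delta_U-\Delta_{Ui}$ and SSA; (ii)$\Rightarrow$(iii) by max-flow/min-cut on the Boolean lattice with unbounded upward capacities, where your Gale argument is the appendix's source--sink network; and (i)$\Rightarrow$(ii) by probe graphs with $\Delta_U=1-2\mathbf 1_{\mathcal F}(U)$, built as an AND--OR circuit over the minimal elements of $\mathcal F$. Your leaf graphs in (a) supply the empty and full upper sets.

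The gadget as you specify it fails, and for a structural reason rather than a weight-bookkeeping one. An AND node joined only to its inputs and to $v$ has no bias toward the purifier. Its contribution to the cost difference that places $v$ is then a clipped, hence odd, function of its weighted input imbalance. Summed over AND nodes, this quantity is odd under $U\mapsto C\setminus U$. Take $C=\{1,2,3,4\}$ and $\mathcal F=\{U\supseteq\{1,2\}\}\cup\{U\supseteq\{3,4\}\}$, for any choice of weights. The complementary pair $\{1,2\},\{3,4\}\in\mathcal F$ forces the net bias of $v$ toward the purifier to be negative. The complementary pair $\{1,3\},\{2,4\}\notin\mathcal F$ forces it to be nonnegative.

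The fix is the paper's construction. Join the node for $M$ to each $i\in M$ with weight $4$ and to the purifier with weight $4(|M|-1)$, which gives AND with margin $4$. Join $v$ to each AND output with weight $2$ and to $0$ with weight $2(m-1)$, where $m$ is the number of minimal elements; this gives OR with margin $2$. Attach $p$ to $v$ with weight $1$. Correcting a wrong AND node then saves at least $4-2$, and correcting $v$ saves at least $2-1$. Hence all gate values are forced with or without $p$, so $S(0U)=B_{\mathcal F}(U)+\mathbf 1_{\mathcal F}(U)$ and $S(0Up)=B_{\mathcal F}(U)+1-\mathbf 1_{\mathcal F}(U)$.

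Drop your fallback, because principal upper sets do not suffice. For $C=\{1,2\}$, the array $(b_\varnothing,b_1,b_2,b_{12})=(-1,1,1,-1)$ has $\sum_Ub_U=0$ and satisfies the constraint for every principal upper set. It nevertheless violates the constraint for $\{\{1\},\{2\},\{1,2\}\}$, namely $b_\varnothing\geq0$, and on the corresponding OR selector it gives $Q_b=-2$. The dual cone consists of the non-increasing $\Delta$. Modulo constants it is generated by $-\mathbf 1_{\mathcal F}$ for \emph{all} upper sets $\mathcal F$, and the non-principal ones are not redundant. Moreover, an upper set such as $\{U\supseteq\{1,2\}\}\cup\{U\supseteq\{3,4\}\}$ is not a threshold function, so no single-vertex star as in \eqref{eq:star-hinge-short} realizes it. The two-level circuit is genuinely needed.
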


\begin{proof}
Set $D_U=S(0Up)-S(0U)$.  Along an upward edge $U\to Ui$, the change in
$D_U$ is
\begin{equation}
 D_U-D_{Ui}=I(p:i\mid0U),
 \label{eq:deletion-edge-CMI-short}
\end{equation}
which is nonnegative by SSA.  Boolean-lattice summation by parts turns
$b=\operatorname{div}w$ into \eqref{eq:one-deletion-CMI-short}, proving
(iii)$\Rightarrow$(iv)$\Rightarrow$(i).

Conversely, Appendix~\ref{app:selectors} constructs for each upper set
$\mathcal F$ a positive-weight graph satisfying
\begin{equation}
 S(0Up)-S(0U)=1-2\mathbf1_{\mathcal F}(U).
 \label{eq:upper-selector-short}
\end{equation}
The empty and full upper sets imply $\sum_Ub_U=0$, and every other upper set
gives its inequality in \eqref{eq:upper-set-short}.  Hence
(i)$\Rightarrow$(ii).  The max-flow/min-cut argument in the same appendix
equates these inequalities with the existence of a nonnegative upward flow,
so (ii)$\Rightarrow$(iii).
\end{proof}

Thus any linear entropy combination that is nonnegative on every graph
entropy vector and has one exact deletion factor is a nonnegative sum of
conditional mutual informations on Boolean-lattice edges.  One deletion
therefore supplies no additional building block for the quadratic
construction.  This is a special case of inclusion dominance
\cite{Grimaldi:2026com}: a single map sends every selection $U$ of
left-hand-side terms to an equal-cardinality selection $f(U)$ on the right
which contains each party at least as often, and preserves inclusions,
$U'\subseteq U\Rightarrow f(U')\subseteq f(U)$.

A second deletion also leaves no intrinsically multipartite remainder.
Vanishing after either $p$ or $q$ is erased gives two independent factors.
After expansion, each remaining coefficient multiplies one conditional mutual
information, and graph positivity fixes its sign separately.

\begin{theorem}[Double-deletion classification]
\label{thm:double-deletion-short}
Suppose a coefficient array with common center $0$ vanishes algebraically
after erasing either $p$ or $q$.  With $C=L\setminus\{p,q\}$, its polynomial
has the form
\begin{equation}
 \mathcal C(z)=(1-z_p)(1-z_q)
 \sum_{U\subseteq C}d_Uz^U.
 \label{eq:double-factor-short}
\end{equation}
It is nonnegative on every classical holographic graph entropy vector if and
only if
\begin{equation}
 Q=\sum_{U\subseteq C}w_UI(p:q\mid0U),
 \qquad w_U=-d_U\geq0.
 \label{eq:double-CMI-short}
\end{equation}
\end{theorem}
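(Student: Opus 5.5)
Two things need proof: the factorization \eqref{eq:double-factor-short}, and the positivity characterization \eqref{eq:double-CMI-short}.

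\emph{Factorization.} Apply \eqref{eq:one-deletion-factor-short} with respect to $p$ to get $\mathcal C=(1-z_p)\mathcal B(z_{L\setminus\{p\}})$. Exact deletion of $q$ means $\mathcal C|_{z_q=1}=0$. Because $(1-z_p)$ is not identically zero and the ring of multilinear polynomials has no zero divisors, this forces $\mathcal B|_{z_q=1}=0$. Since $\mathcal B$ is multilinear in $z_q$, it then factors as $\mathcal B=(1-z_q)\mathcal D(z_C)$, which gives \eqref{eq:double-factor-short}.

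\emph{Translation to entropies.} Expanding $(1-z_p)(1-z_q)z^U$ gives the terms $S(0U)-S(0Up)-S(0Uq)+S(0Upq)=-I(p:q\mid0U)$. Therefore $Q=\sum_U(-d_U)\,I(p:q\mid0U)$. The "if" direction then follows immediately from SSA, since each $I(p:q\mid0U)\geq0$ on graph entropy vectors.

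\emph{The converse.} This is the main step. For each fixed $U_0\subseteq C$ we need a graph entropy vector on which $I(p:q\mid0U)$ is positive only at $U=U_0$. Evaluating $Q$ there gives $Q=w_{U_0}\cdot(\text{positive})$, so graph positivity forces $w_{U_0}\geq0$.

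To build this graph, I would start with a star graph in the style of the proof of Theorem~\ref{thm:strict-pair}, and take the following weights. The leaves $p$ and $q$ get weight $1$. The leaves $i\in U_0$ get a large weight $M$, and the leaves $i\in C\setminus U_0$ get weight $\delta\to0$. The weights $w_0$ and $w_p$ of the center and purifier leaves are tuned so that the cut structure changes exactly when all of $U_0$ is included.

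A single star only produces hinge functions of counts, and these cannot isolate one set. So I would instead adapt the upper-set selector graphs of Appendix~\ref{app:selectors}. There, $S(0Up)-S(0U)=1-2\mathbf 1_{\mathcal F}(U)$. Choose $\mathcal F$ to be the principal upper set generated by $U_0$, and make the construction symmetric so that $q$ also responds as a selector. Then
\[
 I(p:q\mid0U)=D_U-D_{Uq},
\]
where $D_U=S(0Up)-S(0U)$. This quantity is nonzero only when adding $q$ crosses the selector threshold, which localizes it at $U_0$.

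Pushing $q$ into a selector without disturbing the values at other $U$ is where I expect the real difficulty. There are two ways to handle it. One is to take the graph from Appendix~\ref{app:selectors} and insert $q$ as a parallel terminal alongside $p$. The fallback is inclusion-exclusion. Suppose we only have graphs realizing indicator-like functions $\mathbf 1_{\mathcal F}(U)$, each $\geq0$, for upper sets $\mathcal F$. The constraints $\sum_U w_U\mathbf 1_{\mathcal F}(U)\geq0$ over all upper sets do not give individual signs. They do, however, if the construction can realize the indicator of the single set $U_0$.

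Alternatively, the positivity question can be reduced to the one-deletion theorem. Treat $Q$ as a one-deletion form in $p$, with $C'=C\cup\{q\}$ and $b_{U}=d_U$, $b_{Uq}=-d_U$. Theorem~\ref{thm:one-deletion-short}(ii) then requires $\sum_{\mathcal F}b\leq0$ for every upper set $\mathcal F$ of $2^{C'}$. Choose
\[
 \mathcal F=\{V: V\supseteq U_0q\}\cup\{V\not\ni q: V\supseteq U_0,\ V\neq U_0\}.
\]
This is an upper set: the only element dropped relative to the principal upper set of $U_0$ is $U_0$ itself, and every superset of $U_0$ in $2^{C'}$ other than $U_0$ lies in $\mathcal F$. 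Because the $\pm d_U$ contributions cancel in pairs for $U\supsetneq U_0$, the sum equals $-d_{U_0}$, and condition (ii) gives $-d_{U_0}\leq0$.

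I also need to double-check the orientation of this sign against the definition of $b$ in \eqref{eq:one-deletion-form-short}. Choosing $\mathcal F$ as the principal upper set of $U_0$ gives the complementary sum, which is $0$, so it imposes nothing. The single-set isolation above is therefore the key computation, and it bypasses the need for new graphs entirely.
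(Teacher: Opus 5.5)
Your factorization and the SSA direction are fine. Your final argument for the converse is a valid route that differs from the paper's. As written, however, it contains a sign error that reverses the conclusion.

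Compare $\mathcal B=-\sum_{V\subseteq C\cup\{q\}}b_Vz^V$ with $\mathcal B=(1-z_q)\sum_{U\subseteq C}d_Uz^U$. Equivalently, substitute $I(p:q\mid0U)=[S(0Up)-S(0U)]-[S(0Uqp)-S(0Uq)]$ into $Q=\sum_U(-d_U)I(p:q\mid0U)$. Either way you get $b_U=-d_U$ and $b_{Uq}=+d_U$, not the reverse.

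With your assignment, the upper set $\mathcal F=\{V\subseteq C\cup\{q\}:V\supsetneq U_0\}$ gives $-d_{U_0}\leq0$, i.e.\ $w_{U_0}\leq0$, which is the wrong inequality. With the correct assignment, the same $\mathcal F$ gives, after the pairwise cancellations, $\sum_{V\in\mathcal F}b_V=b_{U_0q}=d_{U_0}\leq0$, i.e.\ $w_{U_0}\geq0$, as required. The principal upper set still gives $0$.

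Concretely, on the selector graph of Appendix~\ref{app:selectors} for this $\mathcal F$, you have $I(p:q\mid0U)=2\bigl(\mathbf 1_{\mathcal F}(Uq)-\mathbf 1_{\mathcal F}(U)\bigr)=2\delta_{U,U_0}$, so $Q=2w_{U_0}$. Your reduction therefore does realize the single-CMI isolation you were looking for in your exploratory paragraphs.

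The paper proves the converse with one dedicated star graph instead:
\begin{itemize}
\item the leaves $i\in C$ carry generic weights $a_i$ with distinct subset sums;
\item $p$ and $q$ carry a weight $\epsilon$ smaller than half the minimal gap between subset sums;
\item the center and purifier weights $K$ and $K+2a(U_*)-a(C)$ place the unique kink of the cut function at the middle sample for $V=U_*$.
\end{itemize}
This gives $I(p:q\mid0V)=2\epsilon\,\delta_{V,U_*}$. Your remark that a single star only produces hinge functions of counts is true only for equal leaf weights. With generic weights the hinge depends on the weighted sum and does isolate one set.

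The two routes trade off differently. The paper's argument is self-contained and uses the simplest possible graph. Yours exhibits the double-deletion classification as a corollary of criterion (ii) of Theorem~\ref{thm:one-deletion-short} with no new construction, but it depends on the heavier AND/OR selector gadgets.
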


\begin{proof}
Expanding the contribution of a fixed monomial $d_Uz^U$ in
\eqref{eq:double-factor-short} gives $-d_UI(p:q\mid0U)$, establishing the
decomposition.  It remains to fix the sign of each weight.

Choose any $U_*\subseteq C$.  The star graph constructed in
Appendix~\ref{app:selectors} turns on only the conditional mutual information
with conditioning set $U_*$:
\begin{equation}
 I(p:q\mid0U)=2\epsilon\,\delta_{U,U_*}.
 \label{eq:isolated-CMI-short}
\end{equation}
Evaluation on this graph forces $w_{U_*}\geq0$.  Since $U_*$ is arbitrary,
every weight is nonnegative.  Conversely, SSA makes
\eqref{eq:double-CMI-short} nonnegative for nonnegative weights.
\end{proof}

A third independent deletion is more restrictive.  Add the third factor
$1-z_r$, apply Theorem~\ref{thm:double-deletion-short} to $p,q$, and
treat $r$ as a remaining label.  Multiplication by $1-z_r$ pairs each
two-deletion CMI coefficient with its negative.  Graph positivity requires
both coefficients to be nonnegative, so all vanish.  A nonzero coefficient
array that is nonnegative on every graph entropy vector can therefore have at
most two independent deletion factors.

In summary, one deletion gives a Boolean-lattice sum of conditional mutual
informations, two give \eqref{eq:double-CMI-short}, and three leave only zero.
Exact deletion therefore produces no graph inequality free of the pair
structures controlled by SSA.  This result does not cover cancellation on a
particular entropy branch: all deletion conditions above are identities of
the coefficient array.

\paragraph{Combining scales.}

Deletion acts at one scale.  The second construction evaluates the same
regions at several scales and seeks an identity of the form
$M(R)-M(\lambda R)$ with a nonnegative right-hand side.  Such an identity
would produce monotonicity under the finite step $R\mapsto\lambda R$.

The analysis is finite and algebraic.  The atomization remains fixed and
self-similar as $R$ varies.  Only finitely many integer scale assignments occur
in the nonnegative combination and in the candidate $M$; we call this property
\emph{finite scale support}.  The identity must also hold coefficient by
coefficient, not only along one RG trajectory.  The allowed nonnegative terms
are balanced graph inequalities and conditional mutual informations.  Under
these assumptions, no nonzero combination of the allowed terms is a finite
scale difference.

Fix $\lambda>1$ and assign an integer scale index to each of the $n$ ordinary
atoms,
$\mathbf h=(h_1,\ldots,h_n)\in\mathbb Z^n$.  The Laurent monomial
$x^{\mathbf h}=\prod_i x_i^{h_i}$, which permits positive or negative
integer powers, records these relative scale assignments.
For an entropy coefficient $c$ whose terms share the center $0$, define
\begin{equation}
 q_c(x)=\sum_{T\subseteq[n]}c_Tx^{\mathbf1_T},
 \qquad
 x^{\mathbf1_T}=\prod_{i\in T}x_i.
 \label{eq:scale-boolean-column-short}
\end{equation}
We call $x^{\mathbf h}q_c(x)$ the coefficient column at scale assignment
$\mathbf h$, meaning the same coefficient pattern $q_c$ placed at that
assignment; conditional mutual informations have the same representation.

Let $\gamma_{\mathbf h}(R)$ denote the shape with scale assignment
$\mathbf h$, and write $\mathbf1=(1,\ldots,1)$.  Self-similarity identifies a
dilation of the overall scale with a simultaneous unit shift:
\begin{equation}
 \gamma_{\mathbf h}(\lambda R)
   =\gamma_{\mathbf h+\mathbf1}(R).
 \label{eq:scale-grade-closure-short}
\end{equation}
Consequently, $R\mapsto\lambda R$ multiplies the Laurent polynomial by
\begin{equation}
 P = \prod_{i=1}^n x_i.
 \label{eq:diagonal-shift-short}
\end{equation}
Thus $P$ shifts every atom index by one.

Let the finite Laurent polynomial $B$ encode the candidate $M$.  Then $PB$
encodes the same combination one scale step later, and $(1-P)B$ its finite
difference.  The required coefficient identity is
\begin{equation}
 (1-P)B=A,
 \qquad
 A=\sum_\alpha w_\alpha x^{\mathbf h_\alpha}q_{c_\alpha}
   +\sum_\beta v_\beta x^{\mathbf k_\beta}q_{I_\beta},
 \qquad w_\alpha,v_\beta\geq0.
 \label{eq:scale-cocycle-short}
\end{equation}
Here $A$ is a nonnegative sum of balanced inequalities valid for graph
entropies, with coefficients $c_\alpha$, and conditional mutual informations,
with coefficients $I_\beta$, at finitely many scale assignments.  The
identity is imposed coefficient by coefficient, a stronger condition than
equality along one selected entropy trajectory.

On the hypersurface $P=1$, every finite difference $(1-P)B$ vanishes.  A
nonzero $A$ of the form above does not: one of its pair sums gives a
nonvanishing quadratic term.  Choose a pair $i,j$ and restrict the Laurent
variables to
\begin{equation}
 x_i=e^t,
 \qquad x_j=e^{-t},
 \qquad x_{k\ne i,j}=1.
 \label{eq:reciprocal-curve-short}
\end{equation}
The factors $e^t$ and $e^{-t}$ cancel in $P$, which is why we call this the
reciprocal curve; it lies in $P=1$.
Along it, a balanced coefficient at scale assignment $\mathbf h$ has the
expansion
\begin{equation}
 x^{\mathbf h}q_c(x)
   =-m_{ij}(c)t^2+O(t^3).
 \label{eq:translated-reciprocal-expansion-short}
\end{equation}
Balance removes the constant and linear terms of $q_c$, so multiplication by
$x^{\mathbf h}$ changes only terms of order $t^3$ and higher.  The quadratic
coefficient remains $-m_{ij}(c)$.  For $A$, it is $-m_{ij}(A)$, the weighted
sum of the corresponding pair contributions.

\begin{theorem}[Finite-scale no-go]
\label{thm:scale-no-go-short}
Let $A\ne0$ be a finite nonnegative combination of the translated
graph-inequality and conditional-mutual-information coefficient columns in
\eqref{eq:scale-cocycle-short}.  There is no finite Laurent polynomial $B$
satisfying $(1-P)B=A$.  The same conclusion holds when a scale step also
permutes the atoms by a permutation of finite order.
\end{theorem}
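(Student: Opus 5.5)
The argument evaluates both sides of the putative identity $(1-P)B=A$ on the reciprocal curve \eqref{eq:reciprocal-curve-short}, which lies in the hypersurface $P=1$. Since $B$ is a finite Laurent polynomial, its restriction $B(e^t,e^{-t},1,\ldots,1)$ is a finite sum of exponentials. It is therefore entire in $t$, and multiplying it by $1-P\equiv0$ gives zero identically. So if the identity held, the restriction of $A$ to every reciprocal curve would vanish identically in $t$, and in particular its $t^2$ coefficient would vanish. By \eqref{eq:translated-reciprocal-expansion-short}, applied term by term, that coefficient is $-m_{ij}(A)=-\sum_\alpha w_\alpha m_{ij}(c_\alpha)-\sum_\beta v_\beta m_{ij}(I_\beta)$. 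The translation factors $x^{\mathbf h}$ only affect order $t^3$ and higher, because each column is balanced. For the CMI columns this requires checking that a CMI $I(X:Y\mid 0W)$ with disjoint atom sets is centered and balanced. It is, since each atom appears with cancelling signs.

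Next we fix signs. For each graph-inequality column, Theorem~\ref{thm:strict-pair} gives $m_{ij}(c_\alpha)\le0$ for every pair. For a CMI column $I(X:Y\mid0W)$, the cumulative coefficient $m_{ij}$ is $-1$ when $i\in X, j\in Y$ (or vice versa) and $0$ otherwise. This follows from a direct count: only the term $S(0XYW)$ contains atoms from both $X$ and $Y$, while pairs inside one block or involving $W$ cancel. Hence every summand of $m_{ij}(A)$ is $\le0$. Vanishing of all pair projections of $A$ then forces $w_\alpha m_{ij}(c_\alpha)=0$ and $v_\beta m_{ij}(I_\beta)=0$ for every $\alpha,\beta$ and every pair $i\neq j$.

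Now suppose $A\neq0$. Some term has a positive weight and a nonzero column. If it is a graph-inequality column with $c_\alpha\neq0$, the second part of Theorem~\ref{thm:strict-pair} provides a pair with $m_{ij}(c_\alpha)<0$, which is a contradiction. If it is a nonzero CMI column, then $X,Y$ are both nonempty, so some cross pair has $m_{ij}=-1$, again a contradiction. The main obstacle is this step. Nonzero $A$ guarantees only that some weighted column is nonzero, so one must first discard columns that are zero as coefficient arrays. A trivial CMI such as $X=\varnothing$ is one example. Once those are removed, the strict pair statement applies to each surviving column individually, and the sign argument of the previous paragraph keeps different columns from cancelling.

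For the permutation extension, suppose a scale step acts as $B\mapsto P\,\sigma(B)$, with $\sigma$ permuting the variables and $\sigma^r=\mathrm{id}$. Iterating $r$ times telescopes $(1-P\sigma)B=A$ into $(1-P^r)B=\sum_{s<r}(P\sigma)^sA$, using that $P$ is $\sigma$-invariant. The right side is again a finite nonnegative combination of translated columns of the same types, now with permuted atom labels. Permutation preserves balance, graph validity, and the CMI form, and maps nonzero columns to nonzero columns. The reciprocal-curve argument then applies verbatim with $P^r$ in place of $P$, since the curve still lies in $P^r=1$.
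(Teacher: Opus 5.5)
Your proof is correct and follows the paper's argument. You restrict the putative identity to the reciprocal curve inside $P=1$ and read off the $t^2$ coefficient $-m_{ij}(A)$. You fix its sign with Theorem~\ref{thm:strict-pair}; your direct count for the CMI columns, and your explicit remark about discarding zero columns, just make explicit what the paper leaves inside that theorem. Finally, you telescope over a finite orbit to get $(1-P^m)B=\sum_{r<m}(P\sigma)^rA$ for the permuted case.
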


\begin{proof}
Assume first that a finite Laurent polynomial $B$ satisfies
\eqref{eq:scale-cocycle-short}.  Its finite difference $(1-P)B$ vanishes
identically whenever $P=1$.

The polynomial $A$ cannot vanish on the same hypersurface.  By
Theorem~\ref{thm:strict-pair}, one nonzero summand has a strictly negative
pair sum for some $i,j$, while every other summand has a nonpositive sum for
that pair.  The nonnegative weights cannot cancel the strict term, so
$m_{ij}(A)<0$.  The reciprocal curve
\eqref{eq:reciprocal-curve-short} lies in $P=1$, but along it $A$ obeys
\begin{equation}
 A(t)=-m_{ij}(A)t^2+O(t^3)>0,
 \label{eq:reciprocal-expansion-short}
\end{equation}
for sufficiently small nonzero $t$.  This is incompatible with the
vanishing of $(1-P)B$ on $P=1$.

A finite-order relabeling gives the same contradiction.  If a scale step also
applies $\sigma$ with $\sigma^m=1$, summing the identity over one orbit gives
\begin{equation}
 (1-P^m)B
   =\sum_{r=0}^{m-1}(P\sigma)^rA.
 \label{eq:orbit-cocycle-short}
\end{equation}
After collecting the finite orbit, the right-hand side remains a nonzero
nonnegative sum of the allowed terms and therefore has a strictly negative
pair sum.  The $P=1$ argument again gives a contradiction.
\end{proof}

The theorem applies only to exact coefficient identities with finite support,
fixed self-similar regions, and the terms in
\eqref{eq:scale-cocycle-short}.  It does not cover identities valid only along
one RG flow, infinitely many scales, regions that vary with $R$, or
relabelings of infinite order.

Finite support also constrains the endpoint value of a candidate $M$.  This
separate statement uses balance and finite support but not positivity.  Write
\begin{equation}
 B(x)=\sum_{\mathbf h}b_{\mathbf h}x^{\mathbf h},
 \qquad
 A(x)=(1-P)B(x)=\sum_{\mathbf h}a_{\mathbf h}x^{\mathbf h},
 \label{eq:grade-potential-short}
\end{equation}
and suppose that $A$ is a finite signed sum of balanced coefficient columns.
The total scale index $d(\mathbf h)=\sum_i h_i$ obeys, for every translated
inequality,
\begin{equation}
 \sum_Tc_Td(\mathbf h+\mathbf1_T)
 =d(\mathbf h)\sum_Tc_T+
   \sum_i\sum_{T\ni i}c_T=0.
 \label{eq:balanced-grade-short}
\end{equation}
Thus balance makes the coefficient sum weighted by $d(\mathbf h)$ vanish.
The same holds for finite signed sums.  Multiplication by $P$ raises every
atom index by one and hence increases $d$ by $n$.  For
$A=(1-P)B$ with finite-support $B$, this gives
\begin{equation}
 0=\sum_{\mathbf h}d(\mathbf h)a_{\mathbf h}
   =-n\sum_{\mathbf h}b_{\mathbf h}.
 \label{eq:grade-coboundary-short}
\end{equation}
A finite permutation only reorders the indices and preserves their sum.
Whether or not the scale step includes such a permutation, the coefficients
of $B$ therefore satisfy
\begin{equation}
 \sum_{\mathbf h}b_{\mathbf h}=0.
 \label{eq:zero-charge-short}
\end{equation}
Thus $B$ has zero total coefficient.  An endpoint statement requires
additional physical assumptions. For a three-dimensional CFT vacuum, the entropy of a connected
common-light-cone cut has the regulated form
\begin{equation}
 S_*[\gamma]=\alpha_\delta\mathcal P[\gamma]-F_*.
 \label{eq:fixed-point-affine-short}
\end{equation}
Here $\mathcal P$ is the regulator-dependent local cut contribution,
evaluated with the same regulator and counterterm subtraction prescription in
every term
\cite{Casini:2017vbe,Casini:2017roe,Casini:2018kzx}.  Assume that the
self-similar cuts cancel it:
$\sum_{\mathbf h}b_{\mathbf h}\mathcal P[\gamma_{\mathbf h}]=0$.  Equation
\eqref{eq:zero-charge-short} then cancels the universal constant $F_*$ as
well.

Define
\begin{equation}
 M_B(R)=\sum_{\mathbf h}b_{\mathbf h}S[\gamma_{\mathbf h}(R)],
 \label{eq:potential-functional-short}
\end{equation}
and assume convergence to the corresponding light-cone entropy functionals on
connected boundary cuts at both the UV and IR fixed points.  Then
\begin{equation}
 \lim_{R\to0}M_B(R)=
 \lim_{R\to\infty}M_B(R)=0.
 \label{eq:potential-endpoints-short}
\end{equation}

If $M_B(R)\geq M_B(\lambda R)$ at every $R$, iteration toward the IR gives
$M_B(R)\geq0$, while backward iteration toward the UV gives $M_B(R)\leq0$.
Thus $M_B$ vanishes identically.

This endpoint conclusion requires cancellation of the local terms,
convergence to the corresponding light-cone entropy functionals on connected
boundary cuts at both fixed points as in
\eqref{eq:potential-endpoints-short}, and monotonicity at every scale.
Without these assumptions, \eqref{eq:zero-charge-short} alone implies neither
the endpoint limits nor the vanishing of $M_B(R)$.

\subsection{A bulk check of pair rigidity}
\label{sec:bulk-response}

The coefficient and endpoint arguments above close the finite fixed-atom
route to a nontrivial monotone under their stated assumptions.  However, they
do not explain why the same pair projection should arise geometrically.  We
therefore apply a special conformal transformation (SCT) in a null direction
transverse to the undeformed extremal surface.  Its parameter $\kappa$ is
distinct from both the bump amplitude and the RG scale
\cite{Nozaki:2013entanglementdensity,Faulkner:2015shape,Casini:2015sumrule}.
On the regulated \emph{noncompact branch} -- where the open extremal surface is
truncated at a remote cutoff held fixed while differentiating in $\kappa$ -- the
order-$\kappa^2$ response contains only constant, one-bump, and two-bump terms.
The remote cutoff is removed only afterward.  Balance
removes the first two,
while the remaining pair terms admit a nonnegative bulk representation which
is strict away from exact AdS.  This is an independent geometric check of
pair rigidity, not an RG monotonicity theorem.

We work with an asymptotically AdS$_4$ domain wall in conformal radial gauge,
\begin{equation}
 ds^2=a(z)\bigl(-dx^+dx^-+dy^2+dz^2\bigr),
 \qquad a(z)=h(z)^{-2},
 \label{eq:dw-metric}
\end{equation}
with $h>0$ and $C^3$ in the interior, and
\begin{equation}
 h(z)=\frac{z}{L_{\rm UV}}+o(z),
 \label{eq:bulk-uv-h}
\end{equation}
at the conformal boundary.  Einstein's equations and the null energy
condition imply
\begin{equation}
 h''(z)\geq0.
 \label{eq:dw-nec}
\end{equation}
Thus the null energy condition (NEC) makes $h$ convex.  No other property of
the matter sector enters the sign argument
\cite{Freedman:1999gp,Myers:2010xs}.

We work on the regular noncompact branch of connected extremal surfaces.
Both the UV and remote cutoffs are held fixed during differentiation and
removed afterward under the falloff assumptions stated below.

Introduce null coordinates
\begin{equation}
 U=x^+,
 \qquad V=x^-,
 \label{eq:bulk-UV-convention}
\end{equation}
so that the undeformed surface lies at $V=0$.  The linear bulk extension of a
boundary cut $f_T(y)$ is the solution $F_T$ of
\begin{equation}
 \mathcal D F_T=0,
 \qquad
 \mathcal D:=\partial_z(a\partial_z)+\partial_y(a\partial_y).
 \label{eq:D-operator}
\end{equation}
Here $\mathcal D$ is the Jacobi operator, namely the linearized
extremal-surface operator about the undeformed graph.  Let $f_0$ be a common
cut and let $f_i,f_j$ be
nonzero, nonnegative boundary bumps with disjoint closed supports.  Linearity
gives
\begin{equation}
 f_T=f_0+\sum_{\ell\in T}f_\ell,
 \qquad
 F_T=F_0+\sum_{\ell\in T}F_\ell,
 \qquad F_\ell(y,0)=f_\ell(y).
 \label{eq:additive-cuts}
\end{equation}
Although the boundary supports are disjoint, their bulk extensions overlap.
The strong maximum principle gives
\begin{equation}
 F_i>0,
 \qquad F_j>0,
 \label{eq:positive-jacobi}
\end{equation}
in the connected interior.

We assume that the extremal surface is uniquely representable as a
single-valued graph over $(y,z)$, that the Dirichlet Jacobi problem has no zero
mode, meaning no nontrivial solution with vanishing boundary data, and that
the fields obey the regulator falloff conditions in
Appendix~\ref{app:bulk-pair-response}.  These
conditions fix the branch, make the linearized boundary-value problem
nondegenerate, and control the boundary terms when the cutoffs are removed.

The SCT deformation has the following first-order transverse displacement:
\begin{equation}
 N(y,z)=-y^2+n(z),
 \qquad p(z):=n'(z),
 \qquad (ap)'=2a,
 \label{eq:N-definition}
\end{equation}
where $n(0)=0$.  Regularity at the endpoint selects
\begin{equation}
 p(z)=-2a(z)^{-1}\int_z^{z_{\rm IR}}a(s)\,ds<0,
 \qquad p(z)=-2z+o(z)\quad(z\to0).
 \label{eq:p-regular}
\end{equation}
Here $z_{\rm IR}$ is the regular deep-bulk endpoint of the radial interval,
possibly $+\infty$.  Thus regularity fixes both the sign and the near-boundary
behavior of $p$.

Let $\kappa$ be the SCT parameter, and let $S_T(\kappa)$ be the renormalized
entropy in the fixed domain-wall vacuum after the transformation.  Define the
conditional mutual information of the two bump regions, conditioned on the common
region,
\begin{align}
 I_\kappa(i:j\mid0)
 &=S_{\{i\}}(\kappa)+S_{\{j\}}(\kappa)
   -S_\varnothing(\kappa)-S_{\{i,j\}}(\kappa),
 \notag\\
 \mathcal K^{(\kappa)}_{ij,\mathrm{nc}}
 &:=\left.\frac12\partial_\kappa^2
 I_\kappa(i:j\mid0)\right|_{\kappa=0}.
 \label{eq:K-kappa-definition}
\end{align}
It removes the separate responses of $i$ and $j$.  The label ``nc'' denotes the
noncompact branch with the prescribed order of limits: both regulators stay
finite during differentiation, and the remote regulator is removed
afterward.  Appendix~\ref{app:bulk-pair-response} retains all boundary terms
and derives
\begin{equation}
 \mathcal K^{(\kappa)}_{ij,\mathrm{nc}}
 =\frac{1}{4G_N}\frac18
 \int_0^{z_{\rm IR}}dz\int_{\mathbb R}dy\,
 F_iF_j\,\mathcal Dg,
 \qquad
 g:=|\nabla N|^2=4y^2+p^2.
 \label{eq:K-positive-representation}
\end{equation}
The relation $(ap)'=2a$ gives
\begin{equation}
 \mathcal Dg
 =4a\left[
 \left(2+\frac{h'}h p\right)^2
 +\frac{h''}h p^2\right]\geq0.
 \label{eq:Dg-positive}
\end{equation}
The first term is a square and the second is nonnegative by the NEC.
Equation~\eqref{eq:positive-jacobi} then fixes the sign of every factor in
\eqref{eq:K-positive-representation}.

\begin{theorem}[SCT pair positivity]
\label{thm:local-pair-strictness}
Let \eqref{eq:dw-metric} be a $C^3$ asymptotically AdS Einstein-NEC
domain wall satisfying the assumptions above.  For two nonzero nonnegative
smooth bumps with disjoint closed supports,
\begin{equation}
 \mathcal K^{(\kappa)}_{ij,\mathrm{nc}}\geq0.
\end{equation}
The inequality is strict unless the connected bulk component probed by the
extremal surface is exact AdS.  Exact AdS saturates it.
\end{theorem}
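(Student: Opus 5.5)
The plan is to take the representation \eqref{eq:K-positive-representation} as given (it is derived in Appendix~\ref{app:bulk-pair-response} under the stated branch, no-zero-mode and falloff assumptions) and to establish the sign pointwise. Positivity then follows from positivity of each factor of the integrand, and strictness from a unique-continuation argument for the NEC term.

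\textbf{Step 1: pointwise nonnegativity of $\mathcal Dg$.} Write $a=h^{-2}$, so $a'/a=-2h'/h$ and $a''/a=6h'^2/h^2-2h''/h$. Apply $\mathcal D$ to $g=4y^2+p^2$. The $y$-part gives $\partial_y(a\,\partial_y 4y^2)=8a$. The $z$-part is $(a(p^2)')'=2(ap\,p')'$. Use $(ap)'=2a$ to write $ap'=2a-a'p$, so that $ap\,p'=2ap-a'p^2$. Differentiating once more and using $(ap)'=2a$ again leaves a quadratic form in $p$ with coefficients built from $a$, $a'$, $a''$.

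Substituting the expressions for $a'/a$ and $a''/a$ in terms of $h$ should produce
\[
4a\left[\left(2+\tfrac{h'}{h}p\right)^2+\tfrac{h''}{h}p^2\right],
\]
that is, \eqref{eq:Dg-positive}. This is routine algebra and I would verify it carefully. The square term is nonnegative, $a>0$, $h>0$, and $h''\geq0$ by \eqref{eq:dw-nec}, so $\mathcal Dg\geq0$ everywhere.

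\textbf{Step 2: positivity of the Jacobi fields.} The functions $F_i$ and $F_j$ solve the uniformly elliptic equation $\mathcal DF=0$ with nonnegative, nonzero Dirichlet data. With no zero mode, the weak maximum principle gives $F\geq0$, and the strong maximum principle gives strict positivity in the connected interior, which is \eqref{eq:positive-jacobi}. The integrand $F_iF_j\,\mathcal Dg$ is therefore nonnegative. The falloff assumptions make the integral convergent, so $\mathcal K^{(\kappa)}_{ij,\mathrm{nc}}\geq0$.

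\textbf{Step 3: strictness, which I expect to be the main obstacle.} Since $F_iF_j>0$ in the interior, equality forces $\mathcal Dg\equiv0$ on the probed region. Because $y$ is arbitrary there, this gives two conditions:
\[
h''p^2\equiv0,\qquad 2h+h'p\equiv0.
\]
The integral formula \eqref{eq:p-regular} gives $p<0$ throughout, so the first condition yields $h''\equiv0$. Then $h$ is affine, and the boundary condition \eqref{eq:bulk-uv-h} gives $h=z/L_{\rm UV}+c$.

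The second condition with $h'=1/L_{\rm UV}$ gives $p=-2(z+cL_{\rm UV})$. Comparing this with $p=-2z+o(z)$ from \eqref{eq:p-regular} forces $c=0$. Hence $h=z/L_{\rm UV}$, which is exact AdS on the probed connected component. Conversely, for exact AdS one has $a=L^2/z^2$ and $p=-2z$ (the regular solution with $z_{\rm IR}=\infty$), so $2+(h'/h)p=0$ and $h''=0$. Then $\mathcal Dg\equiv0$ and the bound is saturated.

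The delicate points are twofold. The first is justifying that vanishing of the integral forces pointwise vanishing; this needs continuity of $\mathcal Dg$, which follows from $h\in C^3$. The second is that "probed" must mean the interior region where $F_iF_j>0$, which by Step 2 is the whole connected component. I would also check in the appendix's falloff assumptions that no boundary term survives the removal of the remote cutoff, since such a term could otherwise spoil the strict inequality.
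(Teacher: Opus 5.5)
Your proposal is correct and follows the paper's proof essentially step for step. It gets nonnegativity from $F_iF_j>0$ (strong maximum principle) together with the NEC form of $\mathcal Dg$, and strictness by forcing both nonnegative terms in \eqref{eq:Dg-positive} to vanish, which yields $h=z/L$ and $p=-2z$ and is completed by the converse check for exact AdS. Two small tidy-ups: $\mathcal Dg$ depends only on $z$, so the two conditions come from each bracketed term being nonnegative, not from ``$y$ being arbitrary''; and the UV condition $h=z/L_{\rm UV}+o(z)$ already forces the intercept $c=0$, so your detour through $p$ is consistent but redundant.
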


\begin{proof}
Equation~\eqref{eq:positive-jacobi} makes $F_iF_j$ strictly positive in the
interior, while Eq.~\eqref{eq:Dg-positive} makes $\mathcal Dg$
nonnegative.  The integral in
Eq.~\eqref{eq:K-positive-representation} is therefore nonnegative.

Suppose that it vanishes.  Continuity and the strict positivity of $F_iF_j$
then force both nonnegative terms in \eqref{eq:Dg-positive} to vanish:
\begin{equation}
 h''=0,
 \qquad 2+\frac{h'}h p=0,
\end{equation}
throughout the component.  The UV normalization and endpoint regularity give
$h=z/L$ and $p=-2z$.  These are the exact-AdS values.  Conversely, for this
solution $\mathcal Dg$ vanishes identically, so the pair response vanishes
and saturates the bound.
\end{proof}

For any finite collection of disjoint bumps, the order-$\kappa^2$ response
is quadratic in the indicators that specify which bumps are present:
\begin{equation}
 \Xi_{\mathrm{nc}}(T):=
 \left.\frac12\partial_\kappa^2S_T(\kappa)\right|_{\kappa=0}
 =A+\sum_iA_i\mathbf1_{i\in T}
 +\sum_{i<j}A_{ij}\mathbf1_{i\in T}\mathbf1_{j\in T}.
 \label{eq:sct-degree-two}
\end{equation}
Here $\mathbf1_{i\in T}$ equals one when bump $i$ is present and zero
otherwise.  Because $F_T=F_0+\sum_{i\in T}F_i$ is a sum of the individual
linearized, or Jacobi, fields, a second-order variation contains at most two
such indicators.  Appendix~\ref{app:bulk-pair-response} makes this counting
precise.  Balance of the coefficients $c_T$ cancels the
constant and one-bump terms.  With pair sums
$m_{ij}=\sum_{T\supseteq\{i,j\}}c_T$, the full response is
\begin{equation}
 q_{2,\mathrm{nc}}[c]:=\sum_Tc_T\Xi_{\mathrm{nc}}(T)
 =\sum_{i<j}(-m_{ij})
 \mathcal K^{(\kappa)}_{ij,\mathrm{nc}}.
 \label{eq:centered-hei-local-response}
\end{equation}
All sums involving triples and larger subsets drop out of
\eqref{eq:centered-hei-local-response}.  If $c$ is nonzero and valid on
graph entropy vectors, Theorem~\ref{thm:strict-pair} gives
$-m_{ij}>0$ for at least one pair and $-m_{kl}\geq0$ for every pair $k,l$.
If both bumps in the strictly weighted pair are nonzero and the corresponding
connected bulk component probed by the extremal surface is not exact AdS,
$\mathcal K^{(\kappa)}_{ij,\mathrm{nc}}$ is strictly positive.  The total
response is then strictly positive.

This result applies only to the noncompact branch and the order of limits
specified above.  At finite $\kappa$, the compact SCT surface closes at
$\ell\sim|\kappa|^{-1}$, so its remote cutoff moves with $\kappa$ rather than
remaining fixed during differentiation.  Appendix~\ref{app:compact-sct-limit}
isolates the resulting \emph{cap term}, namely the correction to the
pair-subtracted entropy from the remote closing cap of the compact surface,
which is absent from the open fixed-cutoff branch.  The compact extension would follow from an
$o(\kappa^2)$ bound on that term, which remains unproved.

With this qualification, the bulk calculation completes the obstruction
analysis.  We now return to the boundary entropy cone and use two
constructions that evade its finite fixed-atom assumptions.

\section{Radial and angular constraints beyond fixed atoms}
\label{sec:beyond-ssa}

The preceding obstructions rely on two linked restrictions: the number of
parties is finite, and the atomization is held fixed as the scale varies.  We
now relax these assumptions one at a time.  The five-party construction lets
the distinguished region track an already-grown petal; a Markov equality then
removes the zeroth-order entropy gap and exposes a local growth bound.  The
odd-cyclic construction instead sends the number of angular parties to
infinity, avoiding Markov saturation at the price of a residual shape
response.  These mechanisms provide complementary partial answers to the
question posed in the Introduction.

\subsection{Five-party constraints from a moving shell}
\label{sec:radial}

We begin by expanding about a region which has already grown.  This region is
absorbed into the distinguished party of a five-party inequality, while only
the next radial shell is treated as a new outer atom.  The construction does
not contradict Theorem~\ref{thm:scale-no-go-short}: the atomization depends on
the scale, and the cancellation which makes the derivative finite is a
conditional-independence property of the chosen entropy trajectory rather
than an identity of coefficients.

All entropies in this subsection are evaluated on one regulated spacelike slice
approaching the light cone.  We work at leading holographic order and assume
that the five-party inequality is valid for the corresponding regulated HRT
entropies.  The same algebra is unconditional for graph entropies and for RT
entropies in a common static realization.  Our convention is
\begin{equation}
 I(X:Y\mid Z)=S(XZ)+S(YZ)-S(Z)-S(XYZ).
 \label{eq:cmi-convention}
\end{equation}

\paragraph{The five-party inequality.}

The five-party holographic entropy cone, the set of entropy vectors realizable
by five disjoint holographic boundary parties and their purifier, contains the
inequality
\cite{Bao:2015bfa,HernandezCuenca:2019wgh}
\begin{equation}
 \boxed{
 \QFive(0;a,b,c,d)
 =I(c:d\mid0)+I(a:b\mid0c)+I(a:b\mid0d)-I(a:b\mid0)
 \geq0.}
 \label{eq:Q5}
\end{equation}
Equivalently,
\begin{align}
 \QFive={}&S(0ab)+S(0ac)+S(0ad)+S(0bc)+S(0bd)
 \notag\\
 &-S(0a)-S(0b)-S(0cd)-S(0abc)-S(0abd).
 \label{eq:Q5entropy}
\end{align}
Algebraically, \eqref{eq:Q5} is the Ingleton combination with every term
conditioned on $0$.  The unconditioned Ingleton relation is a four-variable
linear-rank inequality familiar from matroid and network information theory;
conditional variants under classical conditional-independence hypotheses are
well known \cite{Studeny:2020ci}.  Our use of \eqref{eq:Q5} concerns its
radial light-cone specialization.

The four active outer parties $a,b,c,d$ must be disjoint boundary atoms.  A
common-atom realization places them in angular slots separated by nonzero
gaps, or buffers, so that their supports are disjoint, as required by
\eqref{eq:Q5-disjoint-support}.  The
distinguished party $0$, however, may be a composite union and may contain a
radial petal which has already been added to the central region.  This
asymmetry between the central and outer labels is the mechanism behind the
shell construction.

\paragraph{Finite shells and conditional monotonicity.}

Let $Z,B,C,D$ be fixed disjoint regions on the regulated slice.  Let $A_R$ be
a nested radial petal, and define the shell
\begin{equation}
 A_R\subset A_{R'},
 \qquad
 E_{R,R'}:=A_{R'}\setminus A_R,
 \qquad R'>R.
 \label{eq:shell}
\end{equation}
We restrict to radii for which $A_{R'}$ is disjoint from
$Z\cup B\cup C\cup D$.  Thus $A_R,E_{R,R'},Z,B,C,D$ are disjoint atoms,
apart from the radial adjacency of $A_R$ and its shell.
The shell occupies the same angular slot as $A_R$ but begins immediately
outside it; $B,C,D$ occupy three other buffered slots.  After $A_R$ is
absorbed into the center, the active outer labels $E_{R,R'},B,C,D$ therefore
have disjoint supports and satisfy the common-atom criterion.

For later convenience, define
\begin{equation}
 \mathcal I_X(R):=I(A_R:B\mid X),
 \qquad
 \mathfrak m(R):=I(C:D\mid ZA_R).
 \label{eq:IX-Markov-gap}
\end{equation}

The condition $\mathfrak m(R)=0$ is a Markov equality: it saturates SSA, or
equivalently sets the conditional mutual information to zero.  The locus on
which this equality holds is called the Markov face of the entropy cone.

\begin{theorem}[Exact finite shell bound]
\label{thm:finite-shell}
Suppose $A_{R'}$ is disjoint from $Z\cup B\cup C\cup D$.  Then, for every
$R'>R$,
\begin{equation}
 \boxed{
 0\leq \mathcal I_Z(R')-\mathcal I_Z(R)
 \leq \mathfrak m(R)
 +\mathcal I_{ZC}(R')-\mathcal I_{ZC}(R)
 +\mathcal I_{ZD}(R')-\mathcal I_{ZD}(R).}
 \label{eq:finite-shell-general}
\end{equation}
This statement is finite and requires no differentiability or Markov
assumption.
\end{theorem}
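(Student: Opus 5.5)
Both inequalities in \eqref{eq:finite-shell-general} will come from entropy inequalities applied to one common atomization. The atoms are $A_R$, $E:=E_{R,R'}$, $Z$, $B$, $C$, $D$, and $A_{R'}=A_RE$.

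\textbf{Lower bound.} I expect this to follow from the chain rule alone. Since $A_{R'}=A_RE$,
\[
\mathcal I_Z(R')-\mathcal I_Z(R)=I(A_RE:B\mid Z)-I(A_R:B\mid Z)=I(E:B\mid ZA_R).
\]
This is nonnegative by SSA.

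\textbf{Upper bound.} Here I will apply $\QFive$ in the form \eqref{eq:Q5}, taking center $0\mapsto ZA_R$ and outer parties $a\mapsto E$, $b\mapsto B$, $c\mapsto C$, $d\mapsto D$. This is exactly the absorption described above. The four outer labels are disjoint atoms in separate buffered slots, so the common-atom criterion holds, and the inequality is valid by hypothesis. It gives
\[
0\le I(C:D\mid ZA_R)+I(E:B\mid ZA_RC)+I(E:B\mid ZA_RD)-I(E:B\mid ZA_R).
\]
The first term is $\mathfrak m(R)$, and the last term is the increment $\mathcal I_Z(R')-\mathcal I_Z(R)$ found above. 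The chain rule applied with conditioning on $ZC$ gives
\[
I(E:B\mid ZCA_R)=I(A_RE:B\mid ZC)-I(A_R:B\mid ZC)=\mathcal I_{ZC}(R')-\mathcal I_{ZC}(R),
\]
and the same step with $D$ gives the $ZD$ term. Substituting and rearranging yields the boxed upper bound.

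\textbf{Main obstacle.} The algebra is just Ingleton conditioned on $ZA_R$, so it poses no difficulty. The step that needs care is legitimacy. First, $\QFive$ must hold for a composite center $ZA_R$ that is radially adjacent to $E$. This is fine, because graph and RT validity concerns any five disjoint parties, and adjacency only matters for realizability, which holds since the outer supports are disjoint. Second, all terms must be evaluated on the same regulated slice with the same prescription, so that the regulator-dependent pieces cancel in the conditional mutual informations. No differentiability or Markov assumption enters, as the statement claims.
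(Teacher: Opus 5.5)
Your proposal is correct and is essentially the paper's own proof. You apply $Q_5$ with center $ZA_R$ and outer parties $(E_{R,R'},B,C,D)$, use the chain rule to rewrite $I(E_{R,R'}:B\mid XA_R)=\mathcal I_X(R')-\mathcal I_X(R)$ for $X=Z,ZC,ZD$, and get the lower bound from SSA applied to $I(E_{R,R'}:B\mid ZA_R)\geq0$; your remarks on the composite center and the common regulated slice match the standing assumptions stated just before the theorem.
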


\begin{proof}
Apply \eqref{eq:Q5} with
\begin{equation}
 0'=ZA_R,
 \qquad
 (a,b,c,d)=(E_{R,R'},B,C,D).
 \label{eq:Q5-shell-assignment}
\end{equation}
Rearranging the resulting inequality gives
\begin{align}
 I(E_{R,R'}:B\mid ZA_R)
 \leq{}&I(C:D\mid ZA_R)
 +I(E_{R,R'}:B\mid ZA_RC)
 \notag\\
 &+I(E_{R,R'}:B\mid ZA_RD).
 \label{eq:Q5shell}
\end{align}
For $X=Z,ZC,ZD$, the conditional-mutual-information chain rule gives
\begin{equation}
 I(E_{R,R'}:B\mid XA_R)
 =\mathcal I_X(R')-\mathcal I_X(R).
 \label{eq:shell-chain}
\end{equation}
Substitution proves the upper bound.  The lower bound is the nonnegativity of
the left-hand side of \eqref{eq:shell-chain}, and therefore follows from SSA.
\end{proof}

The theorem displays the main difficulty in differentiating a generic
holographic inequality.  Set $R'=e^hR$.  If $\mathfrak m(R)>0$ remains fixed
as $h\to0^+$, the term $\mathfrak m(R)/h$ diverges.  Positivity of the finite
five-party gap is not enough; a local upper bound appears only when its
\emph{zeroth-order floor}, the term which remains as the shell thickness
vanishes, is zero.

\begin{corollary}[Conditional radial speed limit]
\label{cor:radial-speed}
Suppose that
\begin{equation}
 I(C:D\mid ZA_R)=0,
 \label{eq:one-Markov-condition}
\end{equation}
at the base radius $R$.  If the indicated right logarithmic derivatives
exist, then
\begin{equation}
 \boxed{
 0\leq R\partial_R^+ I(A_R:B\mid Z)
 \leq R\partial_R^+\!\left[
 I(A_R:B\mid ZC)+I(A_R:B\mid ZD)\right].}
 \label{eq:radial-speed}
\end{equation}
Only the Markov equality at the radius where the derivative is evaluated is
required.  Here $\partial_R^+$ denotes the derivative approached from larger
radii, and $R\partial_R^+$ is the corresponding one-sided derivative with
respect to $\log R$.
\end{corollary}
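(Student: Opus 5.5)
The plan is to specialize Theorem~\ref{thm:finite-shell} to a thin shell and divide by its logarithmic thickness. Set $R'=e^hR$ with $h>0$. By \eqref{eq:one-Markov-condition}, the zeroth-order floor $\mathfrak m(R)$ in \eqref{eq:finite-shell-general} vanishes exactly at the base radius. We only need this at the base radius, because the finite shell bound evaluates $\mathfrak m$ at $R$, not at $R'$. The finite inequality then reads
\begin{equation*}
 0\leq \mathcal I_Z(e^hR)-\mathcal I_Z(R)
 \leq \bigl[\mathcal I_{ZC}(e^hR)-\mathcal I_{ZC}(R)\bigr]
 +\bigl[\mathcal I_{ZD}(e^hR)-\mathcal I_{ZD}(R)\bigr].
\end{equation*}
This holds for every $h>0$ small enough that $A_{e^hR}$ stays disjoint from $Z\cup B\cup C\cup D$. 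Such an interval of $h$ exists by the standing disjointness hypothesis, provided the base petal lies in the interior of its buffered slot.

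Next I divide by $h>0$, which preserves both inequalities, and let $h\to0^+$. By definition, $R\partial_R^+\mathcal I_X(R)=\lim_{h\to0^+}[\mathcal I_X(e^hR)-\mathcal I_X(R)]/h$, the right derivative with respect to $\log R$. Under the hypothesis of the corollary, and Assumption~2.2, these limits exist for $X=Z,ZC,ZD$. Limits preserve non-strict inequalities, and the limit of a sum is the sum of the limits. This yields \eqref{eq:radial-speed}: the lower bound comes from SSA, via the chain rule \eqref{eq:shell-chain}, and the upper bound is additive in the two conditioned terms.

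The only real subtlety is the step where the Markov term is removed. If we assumed merely $\mathfrak m(R)$ small, the quotient $\mathfrak m(R)/h$ would diverge. So exact vanishing at $R$ is essential, and this is precisely what the hypothesis supplies.

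A second point to state explicitly is that the atomization depends on $R$: the center $ZA_R$ changes with the base point. However, the corollary concerns a single base radius, so only one application of \eqref{eq:Q5} per $h$ is needed. For that application, the common-atom condition is the one already verified for Theorem~\ref{thm:finite-shell}, so no uniformity in $R$ is required.
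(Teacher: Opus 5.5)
Your proposal is correct and follows the paper's own argument: set $R'=e^hR$ in Theorem~\ref{thm:finite-shell}, use the Markov equality at the base radius to remove $\mathfrak m(R)$, divide by $h>0$, and let $h\to0^+$. Your remarks on the admissible range of $h$ and on why only the base-radius Markov condition matters agree with the paper; strictly, the right side only needs the right derivative of the sum $\mathcal I_{ZC}+\mathcal I_{ZD}$ to exist, not of each term separately.
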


At either conformal endpoint, the common-null-sheet construction
automatically obeys \eqref{eq:one-Markov-condition}.  Indeed, set
$X=ZA_RC$ and $Y=ZA_RD$.  Since the atoms are disjoint,
$X\cap Y=ZA_R$ and $X\cup Y=ZA_RCD$, so $I(C:D\mid ZA_R)$ is precisely
the SSA gap for $X$ and $Y$.  The vacuum of a CFT is a quantum Markov
state for regions defined by cuts of a common null plane or light cone,
and this gap therefore vanishes
\cite{Casini:2017vbe,Casini:2017roe}.  Thus no independent Markov
hypothesis is needed for the endpoint CFT vacua after the controlled
null limit is taken.  Its persistence at intermediate scales in the
null-cone RG-flow setting, however, remains an additional physical
hypothesis.

To establish Corollary \ref{cor:radial-speed}, divide \eqref{eq:finite-shell-general} by
$h=\log(R'/R)$, impose \eqref{eq:one-Markov-condition}, and take
$h\to0^+$.  Because $B,Z,C,D$ remain fixed, the same statement can be
written directly in entropy variables as
\begin{align}
 0\leq{}&R\partial_R^+\!\left[S(A_RZ)-S(A_RBZ)\right]
 \notag\\
 \leq{}&R\partial_R^+\!\left[
 S(A_RZC)-S(A_RBZC)
 +S(A_RZD)-S(A_RBZD)\right].
 \label{eq:entropy-derivative}
\end{align}
This form makes the analogy with the Casini-Huerta construction
transparent \cite{Casini:2012ei}.  Its content is different: it bounds the
growth of one conditional correlation channel by two conditioned channels,
rather than defining a universal scalar $F$-function.

If \eqref{eq:one-Markov-condition} holds throughout an interval, define
\begin{equation}
 \PhiFive(R):=
 I(A_R:B\mid ZC)+I(A_R:B\mid ZD)-I(A_R:B\mid Z).
 \label{eq:Phi5}
\end{equation}
Equation~\eqref{eq:radial-speed} then implies
\begin{equation}
 R\partial_R^+\PhiFive(R)\geq0.
 \label{eq:Phi5mono}
\end{equation}
Thus $\PhiFive$ is nondecreasing as the nested petal radius increases.  We
call it a conditional nested-region monotone.  Its spectator regions
remain fixed, its endpoint values are not universal, and its all-interval
Markov condition is an additional physical hypothesis.  It should not be
identified with a candidate central charge.

\paragraph{Several moving petals.}

The same argument controls the common outward growth of two active regions.
Let $A_R$ and $B_R$ be nested families which remain mutually disjoint and
disjoint from $Z\cup C\cup D$ throughout the interval, and suppose that
\begin{equation}
 I(C:D\mid ZA_R)=0,
 \qquad
 I(C:D\mid ZB_R)=0.
 \label{eq:two-petal-Markov}
\end{equation}
Growing the two petals sequentially and applying
Theorem~\ref{thm:finite-shell} at each step gives
\begin{equation}
 \boxed{
 0\leq R\partial_R^+I(A_R:B_R\mid Z)
 \leq R\partial_R^+\!\left[
 I(A_R:B_R\mid ZC)+I(A_R:B_R\mid ZD)\right].}
 \label{eq:common-dilation}
\end{equation}
This is a common dilation of the two active petals; the spectators remain
fixed, so it is not a self-similar RG dilation of the complete configuration.
The two equalities in \eqref{eq:two-petal-Markov} are conditions on the
regions present before the infinitesimal shell is added; neither contains that
new shell.

More generally, let $A$ vary in a multiparameter scale-and-shape family while
$B,Z,C,D$ remain fixed.  Every tangent vector $v$ generated by an outward
nested deformation, which enlarges $A$ without removing any of its points,
obeys
\begin{equation}
 0\leq v\!\cdot\!\nabla I(A:B\mid Z)
 \leq v\!\cdot\!\nabla\!\left[
 I(A:B\mid ZC)+I(A:B\mid ZD)\right],
 \label{eq:mixed-tangent}
\end{equation}
when $I(C:D\mid ZA)=0$ at the base point.  If fixed absolute spectator
positions are described by dimensionless variables $u_i=\ell_i/R$, their
fixed-position direction is
\begin{equation}
 v\!\cdot\!\nabla=R\partial_R-\sum_i u_i\partial_{u_i}.
 \label{eq:mixed-vector}
\end{equation}
Equation~\eqref{eq:mixed-tangent} applies only where this direction remains
outward nested.  Moving $Z,C,D$, or moving several active petals without a
sequential shell decomposition, introduces additional terms and Markov
requirements.

\paragraph{Independence and sharpness.}

Only the lower inequality in \eqref{eq:radial-speed} follows from SSA.  We
now show that its upper inequality remains genuinely five-party in the
infinitesimal-shell limit $R'\to R^+$.  We do so by exhibiting abstract
entropy-like set functions which obey SSA and MMI but violate the upper bound;
we do not claim that they arise in a Lorentz-invariant QFT.

Let the nonpurifying labels be $\mathcal N=\{0,e,b,c,d\}$.  Define $H_0$
by declaring $e$ to be a loop, meaning a zero-rank element whose addition does
not change the set function, $H_0(Ue)=H_0(U)$, and assigning, on
$\{0,b,c,d\}$,
\begin{equation}
 H_0(U)=
 \begin{cases}
 0,&|U|=0,\\
 2,&|U|=1,\\
 4,&|U|=2,\\
 6,&U=\{0,c,d\},\\
 5,&|U|=3,\ U\neq\{0,c,d\},\\
 6,&|U|=4.
 \end{cases}.
 \label{eq:H0}
\end{equation}
Define $H_1$ on all five labels by
\begin{equation}
 H_1(U)=
 \begin{cases}
 0,&|U|=0,\\
 2,&|U|=1,\\
 4,&|U|=2,\\
 6,&U\in\{\{0,c,d\},\{e,c,d\}\},\\
 5,&|U|=3\text{ otherwise},\\
 6,&|U|\geq4.
 \end{cases}.
 \label{eq:H1}
\end{equation}
After adjoining a purifier $\mathsf p$, extend these functions by purity:
\begin{equation}
 \widetilde H_i(U)=
 \begin{cases}
 H_i(U),&\mathsf p\notin U,\\
 H_i\!\left(\mathcal N\setminus(U\setminus\{\mathsf p\})\right),
 &\mathsf p\in U.
 \end{cases}.
 \label{eq:pure-extension}
\end{equation}

A purifier image of an inequality is obtained by exchanging an ordinary label
with $\mathsf p$ and then using purity, $S(U)=S(U^c)$.  A finite enumeration
over disjoint composite systems $X,Y,W$, with $X,Y$
nonempty and $W$ arbitrary, gives nonnegative conditional mutual information.
For disjoint nonempty $X,Y,W$, define the tripartite information
\begin{equation}
 I_3(X:Y:W):=S(X)+S(Y)+S(W)-S(XY)-S(XW)-S(YW)+S(XYW).
 \label{eq:tripartite-information-definition}
\end{equation}
MMI is the condition $I_3\leq0$.  The same enumeration gives
\begin{equation}
 -I_3^{\widetilde H_0}(X:Y:W)\in\{0,1\},
 \qquad
 -I_3^{\widetilde H_1}(X:Y:W)\in\{0,1,2\}.
 \label{eq:mmi-enumeration}
\end{equation}
Thus both functions obey SSA and MMI, together with all purifier images of
these inequalities.  The segment
\begin{equation}
 H_s=(1-s)H_0+sH_1,
 \qquad 0\leq s\leq1,
 \label{eq:Hs}
\end{equation}
stays inside the polyhedral cone cut out by the linear SSA and MMI
inequalities.  Nevertheless,
\begin{align}
 I_{H_s}(c:d\mid0)&=0,
 \notag\\
 \left.\partial_sI_{H_s}(e:b\mid0)\right|_{s=0}&=1,
 &
 \left.\partial_sI_{H_s}(e:b\mid0c)\right|_{s=0}&=0,
 &
 \left.\partial_sI_{H_s}(e:b\mid0d)\right|_{s=0}&=0.
 \label{eq:separator-rates}
\end{align}
The tangent violates the upper shell-rate bound while preserving its Markov
hypothesis.  We conclude that this upper bound cannot be obtained as a
nonnegative linear combination of SSA, MMI, or their purifier images.  This
counterexample establishes that the information inequality is irredundant
within that system; it does not supply a physical radial entropy trajectory.

\paragraph{A sharp graph chamber.}

A complementary graph model gives an open region of edge-capacity space whose
entropy vectors lie on the required Markov face.  We call an open
region of edge-capacity space on which the same cuts remain minimizing a
\emph{minimum-cut chamber}; when its entropy vectors lie on the Markov face,
we call it a Markov chamber.  Consider bulk vertices $u,v,w$
with internal edge capacities
\begin{equation}
 w_{uv}=3,
 \qquad w_{uw}=5,
 \qquad w_{vw}=3.
 \label{eq:bulkedges}
\end{equation}
Attach fixed boundary terminals by
\begin{equation}
 Z-v:6,
 \qquad B-w:5,
 \qquad C-u:4,
 \qquad D-w:1,
 \qquad \mathsf p-w:4,
 \label{eq:boundaryedges}
\end{equation}
where $\mathsf p$ is the purifier.  Finally, attach to $u$ many boundary
terminals whose small edge capacities discretize radial increments; we call
them \emph{radial microterminals}.  Their fixed collection, of total capacity
$L\leq2$, is the reservoir.  At a chosen scale, partition this reservoir into
the already-included region $A$, the next shell $E$, and the future
complement, with capacities
\begin{equation}
 a,e,r\geq0,
 \qquad a+e+r=L.
 \label{eq:reservoir}
\end{equation}
The graph and its edge weights remain fixed; changing $R$ changes only which
reservoir microterminals belong to $A_R$.

\begin{proposition}[Sharp graph chamber on the Markov face]
\label{prop:sharp-graph}
For $0\leq L\leq2$, direct minimum-cut enumeration gives
\begin{equation}
 \boxed{
 \begin{aligned}
 I(C:D\mid ZA)&=0,
 &\qquad I(E:B\mid ZA)&=2e,\\
 I(E:B\mid ZAC)&=0,
 &\qquad I(E:B\mid ZAD)&=2e.
 \end{aligned}}
 \label{eq:graph-CMIs}
\end{equation}
The finite shell theorem is therefore saturated nontrivially throughout the
open minimum-cut chamber $a,e,r>0$ with $L<2$ on the Markov face.
\end{proposition}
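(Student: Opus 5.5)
The plan is to verify the four boxed values by direct minimum-cut enumeration, working with a small number of affine functions of the reservoir capacities. The reservoir microterminals all attach to the single bulk vertex $u$. So for any terminal set $T$, a cut is determined by the side on which each of the three bulk vertices $u,v,w$ lies. If $u$ sits opposite to a microterminal, that microterminal's small edge is cut. Hence
\[
 S(T)=\min_{\sigma\in\{0,1\}^{\{u,v,w\}}}\Bigl[\text{fixed-edge cost}(\sigma,T)+\bigl(\text{total reservoir capacity in }T\text{ placed opposite }u\bigr)\Bigr],
\]
a minimum of eight affine functions of $(a,e,r)$. The terminals $Z,B,C,D,\mathsf p$ and the internal capacities \eqref{eq:bulkedges}, \eqref{eq:boundaryedges} are fixed, and only membership of reservoir microterminals in $A$ or $E$ changes. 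The relevant terms are the entropies of $ZA$, $ZAC$, $ZAD$, $ZACD$, $ZAE$, $ZAEB$, $ZAB$, and the $C$- and $D$-conditioned versions $ZACE$, $ZACB$, $ZACEB$, $ZADE$, $ZADB$, $ZADEB$. For each of these I will tabulate the eight cut costs and identify the minimizer.

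The key structural point to establish first is that, for $L<2$, the side assigned to $u$ is fixed by the heavy edges ($C$–$u$ of weight $4$, $u$–$w$ of weight $5$, $u$–$v$ of weight $3$) rather than by reservoir weights. Consequently the minimizing configuration of $(u,v,w)$ for each term does not depend on how $L$ is split among $A,E,r$. Each entropy is then exactly one affine function: a constant plus $a$, $e$, or $r$ contributions according to whether $u$ is on the $T$ side. I expect the margin between the optimal and the next-best configuration to be a positive multiple of $2-L$ in the critical terms, which explains the hypothesis $L\leq2$. With the minimizers in hand, each conditional mutual information becomes a signed sum of these affine forms. I will check that $I(C:D\mid ZA)$ cancels identically, that $I(E:B\mid ZA)$ and $I(E:B\mid ZAD)$ both reduce to $2e$, and that $I(E:B\mid ZAC)$ vanishes. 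The last holds because once $C$ is included, $u$ is pulled to the $T$ side in every relevant cut, so adding $E$ adds no cost either with or without $B$.

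Saturation then follows by substituting into Theorem~\ref{thm:finite-shell}, equivalently into \eqref{eq:Q5shell} with the assignment \eqref{eq:Q5-shell-assignment}. The left side is $2e$ and the right side is $0+0+2e$, so equality holds. It is nontrivial whenever $e>0$, and the Markov hypothesis \eqref{eq:one-Markov-condition} holds. Openness of the chamber comes from the strict inequalities between optimal and competing configurations: they hold for $a,e,r>0$ and $L<2$, so the same cuts remain minimizing under small perturbations of all edge capacities.

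The main obstacle is the enumeration itself, specifically handling ties. Several terms, for example $S(ZA)$, already exhibit degenerate minimizers: placing $v$ with $Z$ or not costs $6+a$ either way. I must confirm that such ties affect only the value and never the chosen reservoir side of $u$, so the entropy remains a single affine function throughout the chamber. I will organize the check by first fixing the side of $u$ in each term using the weight-$4$ and weight-$5$ edges, and only then minimizing over $v,w$. This reduces each term to comparing at most a few constants against $L$.
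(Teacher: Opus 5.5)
Your plan is the paper's own route. The paper's proof is just the list of thirteen minimum cuts in Appendix~\ref{app:q5-min-cuts}, and your eight-configuration check reproduces exactly those affine values. For instance, $S(ZAC)=8+e+r$ and $S(ZAEB)=9+r$ beat the best cut with $u$ on the other side by $2-L+2a$ and $2-L+2(a+e)$ respectively. So the four boxed values and the saturation $2e=0+0+2e$ follow as you describe.

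One wording fix: in $I(E:B\mid ZAC)$, adding $E$ does not leave the cost unchanged. With $u$ on the region side, it uncuts the $E$-edges and lowers the cost by exactly $e$, both with and without $B$. The conditional mutual information vanishes because these two increments are equal. In $I(E:B\mid ZA)$ the increments are instead $+e$ and $-e$, because $u$ changes side when $B$ is added.

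The step that fails is your justification of openness. Strict inequalities do not separate the optimal cut from \emph{all} competitors, and the same cuts do not survive small perturbations of \emph{all} edge capacities. The strict margins, all at least $2-L$, hold only against configurations with $u$ on the opposite side. Within a fixed side of $u$ there are exact ties among the fixed weights, and the Markov equality itself sits on one of them.

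In $S(ZACD)$, the configurations $(u,v,w)=(1,1,0)$ and $(1,1,1)$ cost $w_D+w_{uw}+w_{vw}=9$ and $w_B+w_{\mathsf p}=9$ respectively, each plus $e+r$. Lower the purifier weight to $4-\eta$. This leaves $S(ZA)$, $S(ZAC)$ and $S(ZAD)$ unchanged but gives $S(ZACD)=9-\eta+e+r$, hence $I(C:D\mid ZA)=\eta>0$. So in the full capacity space the construction lies on the boundary of the region where the Markov equality holds, not in its interior.

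The proposition claims openness only in the reservoir variables $(a,e,r)$, and your own structural observation proves that. The reservoir attaches only to $u$. Every minimizer of a given term, tied ones included, has the same side of $u$, and therefore the same dependence on $(a,e,r)$. The margin over the opposite side is at least $2-L>0$. Hence the set of minimizing cuts is constant on $a,e,r>0$, $L<2$. The endpoint $L=2$ then follows by continuity of the minimum, since the competing cuts tie only in value there. Replace your sentence about perturbing all capacities with this argument.
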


The relevant minimum cuts are listed in
Appendix~\ref{app:q5-min-cuts}.  Writing the cumulative included reservoir
capacity as $a(R)$, the same chamber has
\begin{equation}
 I(A_R:B\mid Z)=2a(R),
 \qquad
 I(A_R:B\mid ZC)=0,
 \qquad
 I(A_R:B\mid ZD)=2a(R).
 \label{eq:graph-M}
\end{equation}
For a finite reservoir, $a(R)$ is a step function.  Thus the proposition
establishes sharp finite-shell saturation, including shells of nonzero
capacity, but a single finite graph does not exhibit a nonzero smooth local
rate.  A refining sequence of reservoirs, in which the individual
microterminal capacities tend to zero, can instead converge to a continuous
$a(R)$.  If that limit is differentiable and stays in the same cut chamber,
\begin{equation}
 R\partial_R I(A_R:B\mid Z)=2R a'(R),
 \label{eq:graph-rate}
\end{equation}
and the upper speed limit is saturated wherever the limiting derivative
exists.  The nonzero differential statement therefore belongs to the
continuum-reservoir limit, while every member of the refining sequence is an
exact finite holographic entropy vector.

Every finite reservoir discretization is an exact holographic graph model.
The graph-to-hyperbolic-geometry construction therefore supplies a
time-symmetric RT realization \cite{Bao:2015bfa}.  In this construction,
\emph{thickening} replaces weighted graph edges by narrow hyperbolic bridges
whose bottleneck sizes reproduce the edge capacities.  The resulting RT
geometry generally has several asymptotic boundaries, however, and does not
prove that the same
chamber occurs in a smooth, homogeneous, single-boundary RG-flow vacuum.
The legal one-boundary light-cone placement and the sharp graph realization
are presently separate consistency checks.

The shell theorem succeeds by using exactly the loopholes left open by the
earlier obstructions.  Its scale-dependent central party contains $A_R$, and
its zeroth-order gap vanishes only on the selected Markov trajectory.  The
result controls nested-petal radial growth.  With the spectator positions held
fixed, it is a mixed scale/shape response rather than a common self-similar
dilation of the entire configuration.  Its conditionality remains a
substantial physical restriction.  We next remove the Markov hypothesis by
taking a continuum limit of an infinite family of angular inequalities.

\subsection{Odd-cyclic constraints from angular refinement}
\label{sec:angular}

The moving-shell construction gives a local correlation-growth bound, but it
pays for this result with a Markov equality.  We now follow a different route
and take a continuum limit of an infinite family of finite-party
inequalities.  The resulting constraint requires no entropy equality, but it
controls an angular shape derivative rather than a radial derivative by
itself.

Let $n=2k+1$ and excise a circle of radius $r_0$ about the cone tip.  Partition
the resulting regulated annulus into equal angular sectors
$A_1,\ldots,A_n$.  The odd-cyclic inequalities compare cyclic sums of
entropies of consecutive blocks.  For $n=2k+1$, the symmetric maximal-block
member compares every block of $k+1$ adjacent sectors with the corresponding
block of $k$ sectors:
\begin{align}
 C_n={}&\sum_{i=1}^{n}S(A_iA_{i+1}\cdots A_{i+k})
 -\sum_{i=1}^{n}S(A_iA_{i+1}\cdots A_{i+k-1})
 \notag\\
 &-S(A_1A_2\cdots A_n)\geq0,
 \label{eq:cyclic}
\end{align}
where the indices are understood cyclically
\cite{Bao:2015bfa,HernandezCuenca:2019wgh}.  Let
$S_W^{(r_0)}(R,\theta)$ denote the entropy of the regulated annular wedge of
opening angle $\theta$, and let $S_A(R,r_0)$ denote the entropy of the full
annulus.  Rotation invariance turns \eqref{eq:cyclic} into the exact finite
inequality
\begin{equation}
 n\left[
 S_W^{(r_0)}\!\left(R,\pi+\frac{\pi}{n}\right)
 -S_W^{(r_0)}\!\left(R,\pi-\frac{\pi}{n}\right)
 \right]\geq S_A(R,r_0).
 \label{eq:finite-angular}
\end{equation}
This formula is already a constraint at every RG scale.  Its continuum limit
extracts a first shape derivative.

\begin{theorem}[Fixed-scale angular inequality]
\label{thm:angular-continuum}
Assume that the finite-$n$ odd-cyclic inequality is valid for each commonly
regulated entropy vector.  Under the regulated null and uniform continuum
assumptions of Section~\ref{sec:ssa-baseline}, the $n\to\infty$ limit at
fixed nonzero tip regulator $r_0$ is
\begin{equation}
 2\pi\left.\partial_\theta S_W^{(r_0)}(R,\theta)
 \right|_{\theta=\pi}-S_A(R,r_0)\geq0.
 \label{eq:angular-annulus}
\end{equation}
Suppose in addition that the joint $r_0\to0$ and UV-regulator limit of this
complete balanced gap exists and agrees with the corresponding
puncture-filled disk combination,
\begin{equation}
 \lim_{r_0\to0}\left[
 2\pi\left.\partial_\theta S_W^{(r_0)}(R,\theta)
 \right|_{\theta=\pi}-S_A(R,r_0)\right]
 =2\pi\left.\partial_\theta S_W(R,\theta)\right|_{\theta=\pi}-S_D(R).
 \label{eq:angular-tip-limit}
\end{equation}
Here the UV regulator is understood to be removed in the common correlated
prescription of Section~\ref{sec:ssa-baseline}.  Then
\begin{equation}
 \boxed{
 \mathcal G_{\rm ang}(R):=
 2\pi\left.\partial_\theta S_W(R,\theta)\right|_{\theta=\pi}
 -S_D(R)\geq0.}
 \label{eq:angular-theorem}
\end{equation}
No Markov saturation is required.
\end{theorem}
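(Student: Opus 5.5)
The plan starts from the exact finite inequality \eqref{eq:finite-angular}, obtained from \eqref{eq:cyclic} by rotation invariance. Every block of $k+1$ consecutive sectors is a regulated annular wedge of opening $(k+1)2\pi/n=\pi+\pi/n$. Every block of $k$ sectors has opening $\pi-\pi/n$. The full union is the annulus. Rotation invariance of the vacuum and of the common regulator makes all $n$ terms in each cyclic sum equal. This gives \eqref{eq:finite-angular} at every $n$ and fixed $r_0>0$, with no Markov input.

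Next I take $n\to\infty$. I rewrite the left side of \eqref{eq:finite-angular} as
\begin{equation*}
 2\pi\,\frac{S_W^{(r_0)}(R,\pi+\pi/n)-S_W^{(r_0)}(R,\pi-\pi/n)}{2\pi/n},
\end{equation*}
which is $2\pi$ times a symmetric difference quotient with step $h=\pi/n$. The continuum assumption of Section~\ref{sec:ssa-baseline} gives uniform convergence of the regulated gaps and of the symmetric shape derivatives on a fixed entropy branch. Under it, this quotient converges to $2\pi\,\partial_\theta S_W^{(r_0)}|_{\theta=\pi}$. The right side $S_A(R,r_0)$ is independent of $n$. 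Because each finite-$n$ gap is nonnegative, the limit is nonnegative, which is \eqref{eq:angular-annulus}.

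One subtlety needs care here. The individual terms carry regulator-dependent perimeter pieces, so each wedge entropy must be computed on the common regulated slice $\Sigma_{\eta,R}$ with a fixed UV cutoff. The balanced combination should be kept intact, rather than split term by term, until after the limit. The radial edges of the two wedges have identical local contributions, so they cancel in the difference. The arc contributions scale with the angle, so they yield a finite derivative. The tip regulator $r_0$ keeps the wedge corners away from the cone tip, where the angular derivative could otherwise be singular. If a phase transition occurred at $\theta=\pi$, I would read the derivative one-sidedly as Assumption~2 permits. The symmetric quotient then converges to the average of the one-sided derivatives, and positivity is still preserved.

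Finally I remove the tip regulator. The hypothesis \eqref{eq:angular-tip-limit} states exactly that the complete balanced gap in \eqref{eq:angular-annulus} converges, as $r_0\to0$ together with the UV regulator, to the puncture-filled disk combination. Nonnegativity survives this limit, so $\mathcal G_{\rm ang}(R)\geq0$. I expect the main obstacle to be the interchange of limits, namely $n\to\infty$ and the null limit $\eta\to0$ at fixed $r_0$, followed by $r_0\to0$. Both steps are granted here only through the stated uniformity and tip-limit hypotheses. The proof should make clear that these are exactly the inputs used, and that no SSA saturation or Markov equality enters at any stage.
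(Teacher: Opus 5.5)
Your proposal is correct and follows the paper's proof. Rotation invariance reduces \eqref{eq:cyclic} to \eqref{eq:finite-angular}, dividing by $2\pi/n$ gives $2\pi$ times a symmetric difference quotient whose uniform limit at fixed $r_0$ yields \eqref{eq:angular-annulus}, and the assumed tip/UV limit \eqref{eq:angular-tip-limit} carries nonnegativity over to $\mathcal G_{\rm ang}(R)$. One minor correction: the angular derivative of the arc perimeter terms is finite only at fixed UV cutoff, and its divergent piece cancels against the perimeter of $S_A(R,r_0)$; this cancellation is exactly why the balanced combination must be kept intact, as you say.
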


The finite-$n$ premise is automatic for graph entropies and for RT entropies
with a common static realization.  Its use for the regulated covariant HRT
regions is the standing assumption stated in the Introduction.

\begin{proof}
At fixed $r_0$, divide the angular difference in
\eqref{eq:finite-angular} by $2\pi/n$.  Uniform convergence of the regulated
symmetric difference quotient gives \eqref{eq:angular-annulus}.  Positivity is
preserved in this joint tip and UV-regulator limit; using
\eqref{eq:angular-tip-limit} gives \eqref{eq:angular-theorem}.
\end{proof}

The order of limits is important.  The complete finite gap is formed before
taking the null, smoothing, $n\to\infty$, tip, and UV limits.  On a
tip-regulated annulus, the leading perimeter terms cancel between the two
wedge sums and the full-annulus term.  That cancellation alone does not
justify filling the puncture: the convergence and identification in
\eqref{eq:angular-tip-limit} are additional assumptions about the
continuously chosen entropy branch, meaning the RT/HRT saddle which remains
minimizing, rather than a formal term-by-term operation.

The sign in \eqref{eq:angular-theorem} is not a consequence of SSA.  For each
finite $n$, consider the rotation-symmetric set function
\begin{equation}
 H_n(U)=\sqrt{\frac{|U|}{n}}.
 \label{eq:Hn}
\end{equation}
It depends only on the cardinality $|U|$ and is a polymatroid, meaning a
normalized, monotone, submodular set function.  Submodularity is the abstract
set-function form of SSA, so $H_n$ obeys every SSA inequality.  However,
\begin{align}
 C_n[H_n]
 &=n\left[\sqrt{\frac{k+1}{n}}-\sqrt{\frac{k}{n}}\right]-1
 \notag\\
 &=\frac{\sqrt{2n}}{\sqrt{n+1}+\sqrt{n-1}}-1
 \longrightarrow\frac1{\sqrt2}-1<0.
 \label{eq:Hn-violation}
\end{align}
Thus the continuum inequality retains information beyond SSA even after
arbitrary angular refinement.  As in the finite separator constructed above,
this abstract polymatroid need not be the
entropy vector of a Lorentz-invariant QFT.

\paragraph{Lorentz Ward identity and radial/shape mixing.}

Although \eqref{eq:angular-theorem} is a shape constraint, Lorentz covariance
relates its angular derivative to a deformation containing the physical scale
$R$.  For an outer-cut profile $f(\phi)$, define the first variation at fixed
angular endpoints by
\begin{equation}
 \delta_fS_W(R,\theta)
 :=\left.\frac{\dd}{\dd\epsilon}
 S\!\left[\rho_{\rm out}(\phi)=R(1+\epsilon f(\phi))\right]
 \right|_{\epsilon=0}.
 \label{eq:shape-variation}
\end{equation}
An infinitesimal boost along the wedge bisector acts, to first order, as
\begin{equation}
 \rho'=\rho(1+\epsilon\cos\phi),
 \qquad
 \phi'=\phi-\epsilon\sin\phi.
 \label{eq:boost}
\end{equation}
The entropy is invariant under this Lorentz transformation.  Accounting for
the motion of the two angular endpoints gives the Ward identity
\begin{equation}
 \delta_{\cos\phi}S_W(R,\theta)
 =2\sin(\theta/2)\,\partial_\theta S_W(R,\theta).
 \label{eq:ward}
\end{equation}
At $\theta=\pi$, linearity of first variations and
$\delta_1S_W=R\partial_RS_W$ rewrite
\eqref{eq:angular-theorem}, for any constant $c$, as
\begin{equation}
 \pi c\,R\partial_RS_W(R,\pi)
 +\pi\delta_{\cos\phi-c}S_W(R,\pi)
 \geq S_D(R).
 \label{eq:mixed-c}
\end{equation}
The choice $c=2/\pi$ makes the residual angular profile have zero mean and
gives
\begin{equation}
 \boxed{
 2R\partial_RS_W(R,\pi)
 +\pi\delta_{\cos\phi-2/\pi}S_W(R,\pi)
 \geq S_D(R).}
 \label{eq:mixed-main}
\end{equation}

Equation~\eqref{eq:mixed-main} is a covariant rewriting of the angular
inequality, not a second positivity theorem.  In particular, the zero-mean
shape response has no definite sign.  Discarding it would therefore not
produce a valid radial monotone.

At a finite tip radius $r_0$, the boost moves both radial cuts.  The
regulator-safe Ward identity replaces
\begin{equation}
 R\partial_R\longrightarrow R\partial_R+r_0\partial_{r_0},
 \label{eq:tip-derivative}
\end{equation}
and deforms both cuts by the residual profile.  The complete balanced
combination must be formed before taking $r_0\to0$; its separate terms need
not be individually scheme independent.

The mixed form remains beyond SSA even if one imposes the abstract Ward
identity.  For a nonnegative null-cut profile $\gamma(\phi)$, define
\begin{equation}
 H(A_\gamma)=\sqrt{\int_A\gamma(\phi)\,\dd\phi}.
 \label{eq:lorentz-witness}
\end{equation}
This functional is submodular and invariant because
$\gamma'\dd\phi'=\gamma\dd\phi$ under a tip-fixing Lorentz transformation.
For round cuts,
$H_W(R,\theta)=\sqrt{R\theta}$ and
$H_D(R)=\sqrt{2\pi R}$.  Its zero-mean residual response vanishes, whereas
\begin{equation}
 2R\partial_RH_W(R,\pi)=\sqrt{\pi R}
 <\sqrt{2\pi R}=H_D(R).
 \label{eq:lorentz-witness-violation}
\end{equation}
It therefore violates \eqref{eq:mixed-main} while obeying SSA and the Ward
identity.

The odd-cyclic construction evades the finite-party pair projection through
the correlated limit $n\to\infty$.  What survives is a genuine beyond-SSA
shape inequality at each fixed scale.  Its Lorentz form contains a radial
derivative, but the accompanying shape term marks the remaining obstruction
to a scalar RG monotone.

\section{Conclusions and outlook}
\label{sec:conclusion}

In this work, we have asked whether holographic entropy inequalities beyond SSA can play the role that SSA plays in the Casini-Huerta proof.  The answer is affirmative, but qualified.  These inequalities do constrain entanglement response along holographic RG flows, but our analysis does not produce a second universal radial monotone.

We first characterized the geometric and infinitesimal obstructions.  A
collection of additive null cuts can be realized by one set of disjoint
regions only when its active subsets form a chain on almost every generator.
For the disjoint-support bump family, every nonzero centered balanced
inequality which is nonnegative on the holographic graph cone has a nontrivial
pair projection, and its common-amplitude quadratic response is a nonnegative
sum of susceptibilities already controlled by SSA.  The six-party functional $G_6$
demonstrates that this projection can discard genuinely non-Shannon
information.  Exact deletion and finite-support scale constructions do not
restore that information under the assumptions we have stated.  The
domain-wall SCT calculation gives an independent bulk manifestation of the
same pairwise structure, although its deformation parameter $\kappa$ is not
the RG scale $R$.

We then obtained two positive results which evade the obstruction in
different ways.  For the five-party inequality, absorbing the already-grown
petal into the conditioning system isolates one zeroth-order conditional mutual
information.  When this quantity vanishes, the finite shell inequality gives
a local upper speed limit on nested-petal correlation growth.  The upper bound is
not a consequence of SSA or MMI as an entropy-cone statement.  Finite
holographic graphs sharply saturate the finite-shell bound, while a refining
continuum-reservoir limit saturates the differential bound.  If the Markov
condition holds throughout an interval, the same bound defines a conditional monotone.  We
have not shown that the required Markov chamber occurs in a smooth
single-boundary holographic RG-flow vacuum.

The odd-cyclic construction gives a complementary result.  It requires no
Markov equality and yields an angular differential inequality at every scale,
under the stated finite-$n$ HRT-validity, common-regulator,
uniform-continuum, and correlated tip/UV-limit assumptions.  Lorentz
covariance exposes an RG-scale derivative in an equivalent mixed response.
However, the remaining zero-mean shape variation is not sign-definite.  We
therefore cannot extract a universal radial monotone from this inequality
alone.

Taken together, these statements show that higher holographic entropy
inequalities can constrain entanglement response at points along an RG flow,
but they do not reproduce all ingredients of the SSA construction in a single
observable.  The five-party result controls nested-petal radial growth but
requires a Markov equality, whereas the odd-cyclic result requires no such
equality but retains a sign-indefinite
shape term.  A universal scalar beyond-SSA monotone therefore remains open.

There are several concrete directions in which the present analysis can be
extended.

\begin{enumerate}[label=(\roman*),leftmargin=2.1em]
 \item \emph{Physical realization of the Markov chamber.}
 The sharp graph model should be replaced by a smooth, homogeneous,
 single-boundary bulk geometry.  One possible route is to search directly in
 asymptotically AdS domain walls for RT or HRT phases satisfying
 $I(C:D\mid ZA_R)=0$ while the shell response remains nonzero.  Such a
 construction would determine whether the conditional monotone can occur on
 a genuine RG trajectory and whether this vanishing-CMI condition can persist
 over a finite interval.

 \item \emph{Separating cyclic scale and shape responses.}
 The obstruction in the odd-cyclic theorem is the sign-indefinite zero-mean
 shape derivative.  Averaging over boost directions, decomposing the response
 into angular harmonics, or combining the cyclic inequalities with
 independent shape-variation bounds may control this term.  A successful
 construction would turn the Markov-independent angular theorem into a radial
 statement.

 \item \emph{Higher-order and nonlocal probes.}
 The quadratic response is forced onto pair data, but our argument does not
 constrain cubic or higher orders.  Determining whether any higher response
 of $G_6$ retains its non-Shannon coefficients would test where genuinely
 multipartite local information first appears.  Similarly,
 continuum kernels involving infinitely many scales may evade
 Theorem~\ref{thm:scale-no-go-short}, whose proof assumes a Laurent polynomial
 with finitely many monomials, provided their convergence and endpoint
 behavior can be controlled.

 \item \emph{Covariant and regulator control.}
 A fully intrinsic derivation should establish the required holographic
 inequalities and all null, angular, and shell limits directly for HRT
 surfaces.  In the bulk SCT calculation this includes the outstanding compact
 cap estimate.  More generally, a regulator-independent treatment would
 clarify which separated terms in the mixed radial/shape inequality define
 observables individually, rather than only through their balanced
 combination.
\end{enumerate}
These directions isolate the ingredients that a genuine beyond-SSA analogue of the $F$-theorem would  require: an attainable physical limit or entropy-saturation condition that removes the finite entropy gap, a sign  definite response along a self-similar RG trajectory, and fixed-point data with a universal interpretation.  Establishing all three ingredients in one construction remains open.

\section*{Acknowledgments}

We are grateful to Yikun Jiang, Jacob March, and Joydeep Naskar for productive conversations about this work, and especially to Keiichiro Furuya for many useful discussions and collaboration in the initial stages of this project. We also acknowledge the language model ChatGPT-5.6 Sol, which was used in this research. The authors take full responsibility for the accuracy of this work, having reviewed, verified, and approved all AI-generated content. N.\,B. is supported by Northeastern University Department of Physics, Brookhaven National Laboratory, and the U.S. Department of Energy ASCR EXPRESS grant, Novel Quantum Algorithms from Fast Classical Transforms. C.\,F. is supported by the National Science Foundation under Cooperative Agreement PHY-2019786 (the NSF AI Institute for Artificial Intelligence and Fundamental Interactions).

\appendix

\section{Auxiliary graph constructions}
\label{app:selectors}

This appendix collects the graph details behind both the deletion
classifications and the sharp five-party shell example.  For deletion, the
first graph family realizes every upper-set constraint in
\eqref{eq:upper-set-short}, while the second isolates one conditional mutual
information.  The minimum cuts for the five-party example are listed in
Subsection~\ref{app:q5-min-cuts}.

Let $\mathcal F\subseteq2^C$ be an upper set and let $\mathcal M$ be its
inclusion-minimal members.  They form an antichain, meaning that no member
contains another.  Membership is the monotone Boolean function
\begin{equation}
 \mathbf1_{\mathcal F}(U)
 =\bigvee_{K\in\mathcal M}\bigwedge_{i\in K}\mathbf1_{i\in U},
 \label{eq:selector-DNF-app}
\end{equation}
where $\wedge$ and $\vee$ denote Boolean AND and OR, respectively.  Thus
$U\in\mathcal F$ precisely when it contains every element of at least one
$K\in\mathcal M$.
Call the cut side containing \(0\) the \(1\)-side and the purifier side the
\(0\)-side.  For the region \(0U\), terminal \(i\in C\) has value
\(x_i=\mathbf1_{i\in U}\).

For each \(K\in\mathcal M\) of size \(q\geq2\), introduce a vertex \(a_K\).
Connect it to each input in \(K\) with weight \(4\) and to the purifier with
weight \(4(q-1)\).  If \(k\) inputs lie on the \(1\)-side, its two costs are
\begin{equation}
 E_K(0)=4k,
 \qquad
 E_K(1)=4(q-k)+4(q-1).
 \label{eq:and-cost-app}
\end{equation}
The \(1\)-side wins by \(4\) when all inputs are one; otherwise the
\(0\)-side wins by at least \(4\).  Thus the minimum cut uniquely sets
\(a_K=\bigwedge_{i\in K}x_i\).  A singleton \(K\) uses its terminal directly.

Denote these outputs by \(y_1,\ldots,y_m\).  Join an output vertex \(v\) to
each \(y_\ell\) with weight \(2\), and for \(m\geq2\) join \(v\) to the center
with weight \(2(m-1)\).  If \(k\) signals equal one, then
\begin{equation}
 E_v(0)=2k+2(m-1),
 \qquad
 E_v(1)=2(m-k).
 \label{eq:or-cost-app}
\end{equation}
Hence \(v=1\) exactly when at least one signal is one, with a unique
minimizing side.

Attach the probe terminal \(p\) to \(v\) with a unit edge.  The probe cannot
change the gate values, namely the Boolean values at the vertices $a_K$ and
$v$: correcting an AND output saves at least \(4\) and
changes its edge to \(v\) by at most \(2\); correcting \(v\) then saves at
least \(2\) and changes the probe cost by at most \(1\).  Attaching $v$ to
the purifier or center with weight \(2\) fixes the required constant output
and gives the empty or full upper set.

Let \(B_{\mathcal F}(U)\) be the cut cost without the probe edge.  The probe
lies on the purifier side for \(S(0U)\) and on the center side for
\(S(0Up)\), giving
\begin{align}
 S(0U)&=B_{\mathcal F}(U)+\mathbf1_{\mathcal F}(U),
 \notag\\
 S(0Up)&=B_{\mathcal F}(U)+1-\mathbf1_{\mathcal F}(U).
 \label{eq:selector-costs-app}
\end{align}
Their difference is
\begin{equation}
 S(0Up)-S(0U)=1-2\mathbf1_{\mathcal F}(U),
 \label{eq:selector-signal-app}
\end{equation}
which is \eqref{eq:upper-selector-short}.

For the converse in Theorem~\ref{thm:one-deletion-short}, connect a source
to \(U\) with capacity \(b_U\) when \(b_U>0\), connect \(U\) to a sink with
capacity \(-b_U\) when \(b_U<0\), and give every upward Boolean-lattice edge
capacity larger than the total positive supply
$\sum_{U:\,b_U>0}b_U$.  A minimum cut cannot cross
a lattice edge, so its source-side vertices form an upper set
\(\mathcal F\).  Its capacity is the total supply minus
\(\sum_{U\in\mathcal F}b_U\), which is at least the total supply by
\eqref{eq:upper-set-short}.  Max-flow/min-cut therefore saturates all source
and sink edges, leaving a nonnegative upward flow of divergence \(b\).

For the double-deletion test, fix \(U_*\subseteq C\).  Assign positive star
weights \(a_i\) with distinct subset sums
\(a(V)=\sum_{i\in V}a_i\), and let \(\delta\) be the smallest separation
between two sums.  Give \(p,q\) weight \(\epsilon\), with
\(2\epsilon<\delta\), the center weight \(K\), and the purifier weight
\begin{equation}
 K+2a(U_*)-a(C),
 \label{eq:double-star-weights-app}
\end{equation}
where \(K\) makes both weights positive.  The midpoint of the total star
weight is \(K+a(U_*)+\epsilon\).  The quantity \(I(p:q\mid0V)\) samples the
cut function at
\begin{equation}
 K+a(V),\quad K+a(V)+\epsilon,\quad
 K+a(V)+\epsilon,\quad K+a(V)+2\epsilon.
 \label{eq:double-star-samples-app}
\end{equation}
These values form its discrete second difference and straddle the unique
kink only for \(V=U_*\).  Hence
\(I(p:q\mid0V)=2\epsilon\delta_{V,U_*}\), as in
\eqref{eq:isolated-CMI-short}.  The graph therefore isolates one
conditioning set.

\subsection{Minimum cuts for the sharp five-party graph}
\label{app:q5-min-cuts}

For completeness, we record the minimum cuts used to establish the graph
realization in Subsection~\ref{sec:radial}.  Write
$r=L-a-e$.  The required entropies are
\begin{center}
\renewcommand{\arraystretch}{1.12}
\begin{tabular}{@{}ll@{\qquad}ll@{}}
\toprule
Region & Entropy & Region & Entropy\\
\midrule
$ZA$       & $6+a$       & $ZAC$      & $8+e+r$\\
$ZAD$      & $7+a$       & $ZACD$     & $9+e+r$\\
$ZAE$      & $6+a+e$     & $ZAB$      & $9+e+r$\\
$ZAEB$     & $9+r$       & $ZACE$     & $8+r$\\
$ZACB$     & $5+e+r$     & $ZACEB$    & $5+r$\\
$ZADE$     & $7+a+e$     & $ZADB$     & $8+e+r$\\
$ZADEB$    & $8+r$       &             & \\
\bottomrule
\end{tabular}
\end{center}
Substituting these values into the four conditional mutual informations in
Eq.~\eqref{eq:graph-CMIs} gives the claimed result.  The same cuts remain
minimal throughout the interior $a,e,r>0$ with $L<2$.  The saturation is
therefore an open minimum-cut chamber rather than an isolated kink.

The logical dependence of the radial argument may be summarized as
\begin{equation}
 Q_5\geq0
 \quad\text{and}\quad
 I(C:D\mid ZA_R)=0
 \quad\Longrightarrow\quad
 \text{finite shell bound}
 \quad\Longrightarrow\quad
 \text{right radial speed limit}.
 \label{eq:q5-logical-chain}
\end{equation}
Independently, the finite odd-cyclic inequality gives the angular continuum
bound, which the Lorentz Ward identity rewrites as the mixed radial/shape
constraint.

\section{Special conformal transformation at finite cutoff}
\label{app:bulk-pair-response}

We derive Eq.~\eqref{eq:K-positive-representation} on the regulated
noncompact branch.  Both UV and remote regulators remain in place while the
embedding is expanded and the pair subtraction
$A_{\{i\}}+A_{\{j\}}-A_\varnothing-A_{\{i,j\}}$ is formed.  This order is
necessary because fields with zero Dirichlet data at the conformal boundary
can carry flux through a finite-cutoff surface.  After deriving the exact
identity with all boundary terms, we state sufficient conditions for
removing the regulators.  The compact surface requires a joint, or
\emph{diagonal}, limit in which the remote cutoff scales as
$\ell=|\kappa|^{-1}$, together with an estimate that remains unproved.
We use the convention
\begin{equation}
 U=x^+,
 \qquad V=x^-.
 \label{eq:app-UV-convention}
\end{equation}

\subsection{Finite-cutoff expansion}

Parameterize the codimension-two graph through order $\kappa^2$ by
$X^M(y,z)=(U,V,y,z)$.  Before any expansion in the SCT parameter, the
induced determinant is
\begin{equation}
 \det g_{\rm ind}
 =a^2\left[1-\nabla U\mathbin{\cdot}\nabla V
 -\frac14\left(
 |\nabla U|^2|\nabla V|^2
 -(\nabla U\mathbin{\cdot}\nabla V)^2\right)\right].
 \label{eq:app-induced-determinant}
\end{equation}
Setting $V=0$ and varying with respect to $V$ gives
$\mathcal DU=0$.  Hence the undeformed null-plane graph with boundary value
$f_T$ is $U=F_T$, where $F_T$ is the bulk extension defined in
Eq.~\eqref{eq:additive-cuts}.

The SCT boundary data require some care because the transformed graph is a
function of the final coordinate.  Start from
\begin{equation}
 X^+=\kappa Y^2+f_T(Y),
 \qquad X^-=0,
\end{equation}
and apply
\begin{equation}
 x^+=\frac{X^+}{1+\kappa X^+},
 \qquad
 x^-=-\frac{\kappa Y^2}{1+\kappa X^+},
 \qquad
 y=\frac{Y}{1+\kappa X^+}.
 \label{eq:app-null-sct}
\end{equation}
The expansion must hold the final coordinate $y$ fixed, not the initial
coordinate $Y$.  With this prescription, the boundary values are
\begin{align}
 U|_{z=0}
 &=f_T+\kappa\bigl(y^2+yf_Tf_T'-f_T^2\bigr)+O(\kappa^2),
 \notag\\
 V|_{z=0}
 &=-\kappa y^2-\kappa^2y^2f_T+O(\kappa^3).
 \label{eq:app-fixed-y-boundary-jet}
\end{align}
These boundary values require the embedding expansion
\begin{equation}
 U=F_T+\kappa U_{1,T}+O(\kappa^2),
 \qquad
 V=\kappa N+\kappa^2V_{2,T}+O(\kappa^3),
 \label{eq:app-graph-kappa-expansion}
\end{equation}
with
\begin{equation}
 N|_0=-y^2,
 \qquad
 U_{1,T}|_0=y^2+yf_Tf_T'-f_T^2,
 \qquad
 V_{2,T}|_0=-y^2f_T=N|_0f_T.
 \label{eq:app-higher-jacobi-data}
\end{equation}
The term $y^2$ in $U_{1,T}$ is independent of the subset label and cancels
from the four-term pair subtraction.  It cannot be omitted from the
embedding expansion, however, because it is part of the SCT boundary data.

\begin{samepage}
Expanding the Euler-Lagrange equations gives
\begin{align}
 \mathcal DN&=0,
 \notag\\
 \mathcal DU_{1,T}
 &=-\frac12\partial_\mu\!\left(
 a|\nabla F_T|^2\partial_\mu N\right),
 \notag\\
 \mathcal DV_{2,T}
 &=-\frac12\partial_\mu\!\left(
 a|\nabla N|^2\partial_\mu F_T\right).
 \label{eq:app-inhomogeneous-jacobi}
\end{align}
\end{samepage}
Here $\mu$ runs over $y,z$.  The last equation implies that $V_{2,T}$ is
affine in $F_T$.  Substitution into
Eq.~\eqref{eq:app-induced-determinant}, followed by expansion of the square
root through order $\kappa^2$, yields
\begin{align}
 A_{\rm reg}[T]=\int_\Omega a\bigg[1
 &-\frac\kappa2\nabla F_T\mathbin{\cdot}\nabla N
 -\frac{\kappa^2}{2}\left(
   \nabla U_{1,T}\mathbin{\cdot}\nabla N
  +\nabla F_T\mathbin{\cdot}\nabla V_{2,T}\right)
 \notag\\
 &-\frac{\kappa^2}{8}|\nabla F_T|^2|\nabla N|^2
 \bigg]+A_{\rm ct}+O(\kappa^3).
 \label{eq:app-area-second-order}
\end{align}
The domain $\Omega$ in this formula is still truncated at $z=\delta$ and at
a remote radial and transverse boundary.  All four areas in the pair
combination use the same counterterm prescription.  Thus the pair
subtraction can be carried out before either boundary is removed.

At finite cutoffs, zero boundary values do not imply zero boundary flux.
The pair subtraction removes the common profile and selects the terms
bilinear in the two bumps.  Define
\begin{align}
 \mathfrak m_{ij}A
 &:=A_{\{i\}}+A_{\{j\}}-A_\varnothing-A_{\{i,j\}},
 \qquad H_{ij}:=F_iF_j,
 \notag\\
 W_{ij}:=\mathfrak m_{ij}U_1
 &=U_{1,\{i\}}+U_{1,\{j\}}-U_{1,\varnothing}-U_{1,\{i,j\}},
\end{align}
and let $V_i,V_j$ be the increments of $V_{2,T}$ that are linear in $F_i,F_j$.
The pair field $W_{ij}$ has zero Dirichlet data at the conformal boundary:
\begin{equation}
 W_{ij}\big|_{z=0}
 =-y(f_if_j'+f_jf_i')+2f_if_j=0.
 \label{eq:app-W-zero-data}
\end{equation}
The last equality follows from the disjoint supports, but the normal
derivative of $W_{ij}$ can still contribute at finite cutoff.  Since
$\mathcal DN=0$, integration by parts gives
\begin{equation}
 -\frac12\int_\Omega a\,
 \nabla W_{ij}\mathbin{\cdot}\nabla N
 =-\frac12\int_{\partial\Omega}aW_{ij}\partial_nN.
 \label{eq:app-W-boundary-flux}
\end{equation}

The second-order field $V_{2,T}$ likewise contributes before the boundary
limits are taken.  The coefficient of $\kappa^2$ in
Eq.~\eqref{eq:app-area-second-order} is
\begin{align}
 4G_N\mathcal K^{(\kappa)}_{ij,\mathrm{nc}}
 ={}&-\frac12\int_\Omega a\,
 \nabla W_{ij}\mathbin{\cdot}\nabla N
 \notag\\
 &+\frac12\int_\Omega a\left(
 \nabla F_i\mathbin{\cdot}\nabla V_j
 +\nabla F_j\mathbin{\cdot}\nabla V_i\right)
 +\frac14\int_\Omega ag\,
 \nabla F_i\mathbin{\cdot}\nabla F_j,
 \label{eq:app-pre-green-pair}
\end{align}
where $g=|\nabla N|^2$.  The common profile $F_0$ has canceled.  The three
terms come from the pair component of $U_{1,T}$, the second-order field
$V_{2,T}$, and the direct quadratic coupling between $F_i$ and $F_j$.

The $W_{ij}$ contribution is already in flux form in
Eq.~\eqref{eq:app-W-boundary-flux}.  Since
$\mathcal DF_i=\mathcal DF_j=0$, the $V_2$ contribution is also a
Dirichlet-to-Neumann boundary term, namely a boundary integral pairing field
values with their outward normal derivatives.  Applying Green's identity
twice to the
remaining integral gives
\begin{equation}
 2\int_\Omega ag\,
 \nabla F_i\mathbin{\cdot}\nabla F_j
 =\int_\Omega H_{ij}\,\mathcal Dg
 +\int_{\partial\Omega}a\left(
 g\partial_nH_{ij}-H_{ij}\partial_ng\right).
 \label{eq:app-double-green}
\end{equation}
Define $R_i:=V_i-NF_i$ and similarly for $R_j$.  Substituting these
rearrangements into Eq.~\eqref{eq:app-pre-green-pair} gives the exact
finite-cutoff identity
\begin{align}
 4G_N\mathcal K^{(\kappa)}_{ij,\mathrm{nc}}
 ={}&\frac18\int_\Omega H_{ij}\,\mathcal Dg
 +\frac18\int_{\partial\Omega}a\left[
 (g+4N)\partial_nH_{ij}-H_{ij}\partial_ng\right]
 \notag\\
 &+\frac12\int_{\partial\Omega}a\left(
 R_j\partial_nF_i+R_i\partial_nF_j\right)
 \notag\\
 &-\frac12\int_{\partial\Omega}aW_{ij}\partial_nN.
 \label{eq:app-full-boundary-remainder}
\end{align}
This identity is proved before taking any boundary limit.  Its first term is
the bulk contribution in Eq.~\eqref{eq:K-positive-representation}; every
other term is an explicit integral over the regulated boundary.

The boundary integrals must be estimated rather than set to zero.
Equation~\eqref{eq:app-higher-jacobi-data} gives
$V_i=NF_i$, and therefore $R_i|_{z=0}=0$, at the conformal boundary; the same
holds for $j$.  Likewise,
Eq.~\eqref{eq:app-W-zero-data} gives $W_{ij}|_{z=0}=0$.  These statements
hold at $z=0$, not at a finite UV surface, and they say nothing by themselves
about the remote boundary.  The corresponding fluxes must remain in
Eq.~\eqref{eq:app-full-boundary-remainder} until their limits are estimated.
Thus the higher embedding fields are necessary: omitting them would miss
both the $4N\partial_nH_{ij}$ term and the $W_{ij}$ flux.

\subsection{Regulator limits}

To reduce \eqref{eq:app-full-boundary-remainder} to the bulk representation,
we impose the following UV asymptotics, an additional estimate for $W_{ij}$,
and an integrated decay condition at the remote boundary.  The last
condition does not follow from the NEC.

For the UV analysis, take
\begin{align}
 a(z)&=\frac{L_{\rm UV}^2}{z^2}
 \bigl(1+O(z^\nu)\bigr),
 \notag\\
 F_\ell(y,z)&=f_\ell(y)+\frac{z^2}{2}f_\ell''(y)
 +\mathcal L_\ell(y,z)+z^3\phi_\ell(y)+o(z^3).
 \label{eq:app-UV-expansions}
\end{align}
Here the Dirichlet jet means the boundary value together with its
derivatives.  The term $\mathcal L_\ell$ depends locally on these data and
vanishes wherever the data vanish.  ``Polyhomogeneous'' allows powers of
$z$ multiplied by powers of $\log z$, while a boundary collar is a fixed
small-$z$ neighborhood.  Since the two boundary supports are separated,
their local boundary data do not overlap.  Uniformly in such neighborhoods,
one therefore has
\begin{equation}
 H_{ij}=F_iF_j=O(z^3),
 \qquad \partial_zH_{ij}=O(z^2).
 \label{eq:app-H-UV}
\end{equation}

The zero Dirichlet data of $R_\ell$ make its polyhomogeneous expansion begin
with local $O(z^2)$ terms supported on the boundary data of $f_\ell$; the
first nonlocal term is $O(z^3)$.  Separation of the supports controls the two
cross terms involving $R_i$ and $R_j$.

The pair field $W_{ij}$ in Eq.~\eqref{eq:app-W-zero-data} needs a separate
assumption.  We require $W_{ij}=O(z^3)$ uniformly in the separated boundary
collars, allow the same polyhomogeneous logarithms, and assume an integrable
bound in $y$.  Together with $\partial_zN=p=O(z)$, these conditions give
\begin{align}
 aR_j\partial_zF_i&=O(z^2),
 \qquad
 aR_i\partial_zF_j=O(z^2),
 \notag\\
 aW_{ij}\partial_zN&=O(z^2),
 \label{eq:app-R-UV-flux}
\end{align}
up to logarithms when exponents in the near-boundary recursion coincide.
Hence all three fluxes from the
second-order embedding fields vanish as the UV regulator is removed.  Zero
Dirichlet data alone would not suffice.

It remains to control the two boundary terms containing $H_{ij}$.  With
$n(0)=0$, Eq.~\eqref{eq:p-regular} gives, for some integer $r\geq0$,
\begin{equation}
 p=-2z+O\!\left(z^{1+\bar\nu}(1+|\log z|)^r\right),
 \qquad
 n=-z^2+O\!\left(z^{2+\bar\nu}(1+|\log z|)^r\right),
 \qquad \bar\nu:=\min\{\nu,1\}.
\end{equation}
The logarithm allows for such resonances between exponents; when no resonance
occurs, one may take $r=0$.  The terms $p^2$ and $4n$ then cancel at leading
order:
\begin{equation}
 g+4N=p^2+4n
 =O\!\left(z^{2+\bar\nu}(1+|\log z|)^r\right).
 \label{eq:app-g-plus-4N}
\end{equation}
Together with $\partial_zg=O(z)$, this estimate implies
\begin{equation}
 a(g+4N)\partial_nH_{ij}=o(1),
 \qquad
 aH_{ij}\partial_ng=o(1),
\end{equation}
pointwise with an integrable uniform bound.  Hence the UV boundary integral
in Eq.~\eqref{eq:app-full-boundary-remainder} vanishes.  The local
counterterm cross terms vanish separately because the two boundary jets have
disjoint support.  In three boundary dimensions there is no smooth-curve
logarithmic anomaly.

At the remote boundary we assume the integrated decay
\begin{align}
 \lim_{\ell\to\infty}\int_{\partial\Omega_\ell^{\rm rem}}a\bigl(&
 |(g+4N)\partial_nH_{ij}|+|H_{ij}\partial_ng|
 \notag\\[-1mm]
 &+|R_j\partial_nF_i|+|R_i\partial_nF_j|
 +|W_{ij}\partial_nN|\bigr)=0.
 \label{eq:app-remote-flux-assumption}
\end{align}
The NEC does not imply this condition.  Together with the UV assumptions, it
makes every boundary term in
Eq.~\eqref{eq:app-full-boundary-remainder} vanish as the regulators are
removed, leaving Eq.~\eqref{eq:K-positive-representation}.

\subsection{Compact surface and positivity}
\label{app:compact-sct-limit}

The noncompact derivation differentiates at fixed remote cutoff and removes
that cutoff afterward.  On the compact SCT surface, the cone closes at
$\ell\sim|\kappa|^{-1}$ for nonzero $\kappa$, so the remote scale depends on
the deformation parameter and the required limit is diagonal.

To compare the two limits, let $A_\ell$ be the central region retained by the
cutoff, $C_\ell$ the closing cap, and $B_i,B_j$ the two bump regions.  The cap
changes the truncated pair combination by
\begin{equation}
 \Theta_{ij}(\ell;\kappa)
 :=I(C_\ell:B_iB_j\mid A_\ell)
  -I(C_\ell:B_i\mid A_\ell)
  -I(C_\ell:B_j\mid A_\ell).
 \label{eq:Theta-cap-short}
\end{equation}
Relating this correction to the local response requires two assumptions:
the fixed-cutoff expansion is uniform enough to exchange its $\kappa^2$
coefficient with the diagonal limit $\ell=|\kappa|^{-1}$, and the lower-order
cap terms cancel.  The entropy identity then gives
\begin{equation}
 \mathcal K^{(\kappa)}_{ij,\mathrm{compact}}
 =\mathcal K^{(\kappa)}_{ij,\mathrm{nc}}
 +\lim_{\kappa\to0}
 \frac{\Theta_{ij}(|\kappa|^{-1};\kappa)}{\kappa^2}.
 \label{eq:compact-local-relation}
\end{equation}
The strict noncompact result would transfer to the compact surface if, in
addition,
\begin{equation}
 \Theta_{ij}(|\kappa|^{-1};\kappa)=o(\kappa^2).
 \label{eq:compact-cap-estimate}
\end{equation}
Equation~\eqref{eq:compact-cap-estimate} is not proved, and the NEC alone
does not control the diagonal limit.  The compact extension therefore
remains open, while Theorem~\ref{thm:local-pair-strictness} applies only to
the noncompact branch.

The sign and degree bounds used in the main text follow directly from the
field equations.  Differentiating $(ap)'=2a$ gives
\begin{equation}
 p'=2+2\frac{h'}h p,
 \end{equation}
and therefore
\begin{align}
 \mathcal Dg
 &=8a+2(app')'
 \notag\\
 &=4a\left[
 \left(2+\frac{h'}h p\right)^2
 +\frac{h''}h p^2\right].
\end{align}
The first term is a square and the second is nonnegative by the NEC, giving
the nonnegative density used in
Theorem~\ref{thm:local-pair-strictness}.  For exact AdS,
$h=z/L$, $p=-2z$, $n=-z^2$, and $g=-4N$, so the density vanishes.  Once the
boundary limits have been taken, the response vanishes as well.

Boolean degree counts powers of the indicators that record which bumps
belong to $T$.  For finitely many disjoint bumps, $F_T$ has degree one and
$N$ has degree zero.  The equation for $V_{2,T}$ is linear in $F_T$, while
the equation and boundary value for $U_{1,T}$ are at most quadratic in
$F_T$ and $f_T$.  Consequently, each
structure in
Eq.~\eqref{eq:app-area-second-order} -- $U_{1,T}\cdot N$,
$F_T\cdot V_{2,T}$, and $|\nabla F_T|^2g$ -- has degree at most two.
The local counterterms and the operations at the remote boundary do not
increase the degree.  It follows that $\Xi_{\mathrm{nc}}(T)$ obeys
Eq.~\eqref{eq:sct-degree-two}.  This is a statement only about the
order-$\kappa^2$ response, not about the full entropy.

\clearpage
\phantomsection

\end{document}